\newcommand{\seq}[1]{\left\langle #1\right\rangle}
\newcommand{\ceil}[1]{\left\lceil #1\right\rceil}
\newcommand{\floor}[1]{\left\lfloor #1\right\rfloor}
\newcommand{\set}[1]{\left\{ #1\right\}}
\newcommand{\realrange}[2]{\left[#1, #2\right]}
\newcommand{\unitrange}[2]{\realrange{0}{1}}
\newcommand{\prob}[1]{{\mathbf{P}}\left[#1\right]}
\newcommand{\expect}{{\mathbf{E}}}
\newcommand{\Oh}[1]{\mathcal{O}\!\left( #1\right)}
\newcommand{\Th}[1]{\Theta\!\left( #1\right)}
\newcommand{\Om}[1]{\Omega\left(#1\right)}
\newcommand{\llabel}[1]{\label{\labelprefix:#1}}
\newcommand{\labelprefix}{} 
\newcommand{\discussionsize}{\small}
\newcommand{\frage}[1]{{[\bf #1]}\marginpar{$\bigotimes$}}
\newcommand{\punkt}{\enspace .}
\newenvironment{code}{\noindent\normalsize
\begin{tabbing}%
\hspace{2em}\=\hspace{2em}\=\hspace{2em}\=\hspace{2em}\=\hspace{2em}\=%
\hspace{2em}\=\hspace{2em}\=\hspace{2em}\=\hspace{2em}\=\hspace{2em}\=%
\kill}{\end{tabbing}}
\newcommand{\labelcommand}{}
\newcommand{\captiontext}{}
\newsavebox{\codeparam}
\newcounter{lineNumber}
\newenvironment{disscodepos}[3]{%
\renewcommand{\labelcommand}{#2}%
\renewcommand{\captiontext}{#3}%
\sbox{\codeparam}{\parbox{\textwidth}{#3}}%
\begin{figure}[#1]\begin{center}\begin{code}\setcounter{lineNumber}{1}}{%
\end{code}\end{center}\caption{\llabel{\labelcommand}\captiontext}\end{figure}}
\newcommand{\Procedure}{{\bf Procedure\ }}
\newcommand{\Dopar}       {{\bf dopar\ }}
\newcommand{\For}      {{\bf for\ }}
\newcommand{\Is}{\mbox{\rm := }}
\newcommand{\To}       {{\bf to\ }}
\newcommand{\If}       {{\bf if\ }}
\newcommand{\Then}     {{\bf then\ }}
\newcommand{\Else}     {{\bf else\ }}
\newcommand{\Return}   {{\bf return\ }}
\newcommand{\Rem}[1]   {{//\hspace{0.5mm}{\em{#1}}}}
\newcommand{\RRem}[1]   {\`{\bf //\hspace{0.5mm}~}{\em{#1}}}
\newdimen\endofsize\endofsize=0.5em
\newcommand{\ForFromToParallel}[3]{{\For $#1$ \Is $#2$ \To $#3$ \Dopar}}
\renewcommand{\Oh}[1]{\ensuremath{\mathcal{O}(#1)}}
\newcommand{\OhL}[1]{\ensuremath{\mathcal{O}\!\left(#1\right)}}
\renewcommand{\Om}[1]{\Omega(#1)}
\renewcommand{\Th}[1]{\Theta(#1)}
\newcommand{\Exch}{\mathrm{Exch}}
\newcommand{\ExchTilde}{\ensuremath{\smash{\mathrm{Exch}\llap{\raisebox{-3pt}{$\widetilde{\phantom{iih}}$}\hspace{3.8pt}}}}}
\newcommand{\Tstart}{\alpha}
\newcommand{\Tword}{\beta}
\newcommand{\Lmax}{L_{\max}}
\newcommand{\Lmin}{L_{\min}}
\newcommand{\postponed}[1]{}
\definecolor{gray}{rgb}{0.5, 0.5, 0.5}
\newtheorem{theorem}{Theorem}
\newtheorem{lemma}{Lemma}
\renewcommand{\frage}[1]{}
\newcommand{\TODO}[1]{%
  {{\sf\textbf{TODO}}[\textit{#1}]}%
  \marginpar{%
    \raggedright\sf\textbf{TODO}\par%
  }}
\renewcommand{\TODO}[1]{}
\begin{document}
\pagestyle{plain}
\conferenceinfo{SPAA}{'15 Portland, Oregon USA}

\title{Practical Massively Parallel Sorting}

\numberofauthors{4}
\author{
\alignauthor
Michael Axtmann\\
       \affaddr{Karlsruhe Inst. of Technology}\\
       \affaddr{Karlsruhe, Germany}\\
       \email{michael.axtmann@kit.edu}
\alignauthor
Timo Bingmann\\
       \affaddr{Karlsruhe Inst. of Technology}\\
       \affaddr{Karlsruhe, Germany}\\
       \email{bingmann@kit.edu}
\alignauthor
Peter Sanders\\
       \affaddr{Karlsruhe Inst. of Technology}\\
       \affaddr{Karlsruhe, Germany}\\
       \email{sanders@kit.edu}
\and
\alignauthor 
Christian Schulz\\
       \affaddr{Karlsruhe Inst. of Technology}\\
       \affaddr{Karlsruhe, Germany}\\
       \email{christian.schulz@kit.edu}
}

\maketitle
\begin{abstract}
  Previous parallel sorting algorithms do not scale to the largest available
  machines, since they either have prohibitive communication volume or
  prohibitive critical path length.  We describe algorithms that are a viable
  compromise and overcome this gap both in theory and practice.  The algorithms
  are multi-level generalizations of the known algorithms sample sort and
  multiway mergesort. In particular our sample sort variant turns out to be very
  scalable.  Some tools we develop may be of independent interest -- a simple,
  practical, and flexible sorting algorithm for small inputs working in
  logarithmic time, a near linear time optimal algorithm for solving a
  constrained bin packing problem, and an algorithm for data delivery, that
  guarantees a small number of message startups on each processor.
\end{abstract}
\pagestyle{plain}
\thispagestyle{plain}

\category{F.2.2}{Nonnumerical Algorithms and Problems}{Sorting and searching}
\category{D.1.3}{PROGRAMMING TECHNIQUES}{Parallel programming}

\terms{Sorting}

\keywords{parallel sorting, multiway mergesort, sample sort}

\section{Introduction}
\label{s:intro}

Sorting is one of the most fundamental non-numeric algorithms which is needed in
a multitude of applications. For example, load balancing in supercomputers often
uses space-filling curves. This boils down to sorting data by their position on
the curve for load balancing. Note that in this case most of the work is done for the
application and the inputs are relatively small. For these cases, we need
sorting algorithms that are not only asymptotically efficient for huge inputs
but as fast as possible down to the range where near linear speedup is out of the
question.

We study the problem of sorting $n$ elements evenly distributed over $p$
processing elements (PEs) numbered $1..p$.%
\footnote{We use the notation $a..b$ as a shorthand for $\set{a,\ldots,b}$.}
The output requirement is that the
PEs store a permutation of the input elements such that the elements on each PE
are sorted and such that no element on PE $i$ is larger than any elements on PE
$i+1$.

There is a gap between the theory and practice of parallel sorting
algorithms.  Between the 1960s and the early 1990s there has been intensive work
on achieving asymptotically fast and efficient parallel sorting algorithms. The
``best'' of these algorithms, e.g., Cole's celebrated $\Oh{\log p}$ algorithm
\cite{Col88}, have prohibitively large constant factors. Some simpler algorithms
with running time $\Oh{\log^2 p}$, however, contain interesting techniques that
are in principle practical. These include parallelizations of well known
sequential algorithms like mergesort and quicksort \cite{Jaj92}. However, when
scaling these algorithms to the largest machines, these algorithms cannot be
directly used since all data elements are moved a logarithmic number of times
which is prohibitive except for very small inputs.

For sorting large inputs, there are algorithms which have to move the data only
once. Parallel sample sort \cite{BleEtAl91short}, is a generalization of quicksort to
$p-1$ splitters (or pivots) which are chosen based on a sufficiently large sample of the
input.  Each PE partitions its local data into $p$ pieces using the splitters and
sends piece $i$ to PE $i$. After the resulting all-to-all exchange, each PE
sorts its received pieces locally. Since every PE at least has to receive the
$p-1$ splitters, sample sort can only by efficient for $n=\Om{p^2/\log p}$,
i.e., it has isoefficiency function $\Om{p^2/\log p}$ (see also \cite{KumEtAl94}).
Indeed, the involved constant factors may be fairly large since the all-to-all exchange
implies $p-1$ message startups if data exchange is done directly.

In parallel $p$-way multiway mergesort \cite{VSIR91,SSP07}, each PE first sorts
its local data. Then, as in sample sort, the data is partitioned into $p$ pieces
on each PE which are exchanged using an all-to-all exchange. Since the local
data is sorted, it becomes feasible to partition the data \emph{perfectly} so
that every PE gets the same amount of data.%
\footnote{Of course this is only possible up to rounding $n/p$ up or down. To
  simplify the notation and discussion we will often neglect these issues if
  they are easy to fix.}
Each PE receives $p$ pieces which have to be merged together.
Multiway mergesort has an even worse isoefficiency function due to the overhead for
partitioning.

Compromises between these two extremes -- high asymptotic scalability but
logarithmically many communication operations versus low scalability but only a single
communication -- have been considered in the BSP model
\cite{Val94}. Gerbessiotis and Valiant \cite{GerVal94} develop a multi-level BSP
variant of sample sort.  Goodrich \cite{Goo99} gives communication efficient
sorting algorithms in the BSP model based on multiway merging. However, these
algorithms needs a significant constant factor more communications per element
than our algorithms. Moreover, the BSP model allows arbitrarily fine-grained
communication at no additional cost. In particular, an implementation of the
global data exchange primitive of BSP that delivers messages directly has a
bottleneck of $p$ message startups for every global message exchange.  Also see
Section~\ref{ss:deliver} for a discussion why it is not trivial to adapt the BSP
algorithms to a more realistic model of computation -- it turns out that for
worst case inputs, one PE may have to receive a large number of small messages.

In Section~\ref{s:building} we give building blocks that may also be of
independent interest.  This includes a distributed memory algorithm for
partitioning $p$ sorted sequences into $r$ pieces each such that the
corresponding pieces of each sequence can be multiway merged independently.  We
also give a simple and fast sorting algorithm for very small inputs. This
algorithm is very useful when speed is more important than efficiency, e.g., for
sorting samples in sample sort. Finally, we present an algorithm for distributing
data destined for $r$ groups of PEs in such a way that all PEs in a group get the same amount of \emph{data} \emph{and} a similar amount of \emph{messages}.

Sections \ref{s:merge} and \ref{s:sample} develop multi-level
variants of multiway mergesort and sample sort respectively.  The basic tuning
parameter of these two algorithms is the number of (recursion) levels.  With $k$ levels, we
basically trade moving the data $k$ times for reducing the startup overheads to
$\Oh{k\sqrt[k]{p}}$.  Recurse last multiway mergesort (RLM-sort) described in
Section~\ref{s:merge} has the advantage of achieving perfect load balance.  The
adaptive multi-level sample sort (AMS-sort) introduced in Section~\ref{s:sample}
accepts a slight imbalance in the output but is up to a factor $\log^2 p$ faster
for small inputs.  A feature of AMS-sort that is also interesting for
single-level algorithms is that it uses overpartitioning.  This reduces the
dependence of the required sample size for achieving imbalance $\varepsilon$ from
$\Oh{1/\varepsilon^2}$ to $\Oh{1/\varepsilon}$. We have already outlined these
algorithms in a preprint~\cite{ABSS14}.

In Section~\ref{s:experiments} we report results of an experimental
implementation of both algorithms.  In particular AMS-sort scales up to
$2^{15}$ cores even for moderate input sizes. Multiple levels have a
clear advantage over the single-level variants.\frage{todo: more highlights?}

\section{Preliminaries}
\label{s:prelim}

For simplicity, we will assume all elements to have unique keys.  This is
without loss of generality in the sense that we enforce this assumption by an
appropriate tie breaking scheme. For example, replace a key $x$ with a triple
$(x,y,z)$ where $y$ is the PE number where this element is input and $z$ the
position in the input array. With some care, this can be implemented in such a
way that $y$ and $z$ do not have to be stored or communicated explicitly. In
Appendix~\ref{app:tie} we outline how this can be implemented efficiently for
AMS-sort.

\subsection{Model of Computation}\label{ss:model}

A successful realistic model is (symmetric) single-ported message passing: Sending
a message of size $\ell$ machine words takes time $\Tstart+\ell\Tword$. The parameter
$\Tstart$ models startup overhead and $\Tword$ the time to communicate a machine word. For
simplicity of exposition, we equate the machine word size with the size of a
data element to be sorted.  We use this model whenever possible. In particular,
it yields good and realistic bounds for collective communication operations. 
For example, we get time $\Oh{\Tword \ell+\Tstart\log p}$ for broadcast, reduction, and prefix sums
over vectors of length $\ell$ \cite{BalEtAl95,SST09}.
However, for moving the bulk of the
data, we get very complex communication patterns where it is difficult to
enforce the single-ported requirement.

Our algorithms are bulk synchronous. Such algorithms are often
described in the framework of the BSP model \cite{Val94}. However, it is not clear how to
implement the data exchange step of BSP efficiently on a realistic parallel
machine.  In particular, actual implementations of the BSP model deliver the
messages directly using up to $p$ startups. For massively parallel machines this
is not scalable enough.  The BSP$^*$ model \cite{BDH95} takes this into account
by imposing a minimal message size. However, it also charges the cost for the
maximal message size occurring in a data exchange for all its messages and this would be too expensive for
our sorting algorithms. We therefore use our own model:  We consider a black box
data exchange function $\Exch(P,h,r)$ telling us how long it takes to exchange
data on a compact subnetwork of $P$ PEs in such a way that no PE receives or
sends more than $h$ words in total and at most $r$ messages in total.  Note
that all three parameters of the function $\Exch(P,h,r)$ may be essential, as
they model locality of communication, bottleneck communication volume (see also
\cite{Borkar13,SSM13}) and startups respectively.
Sometimes we also write $\ExchTilde(P,h,r)$ as a shorthand for $(1+o(1))\Exch(P,h,r)$
in order to summarize a sum of $\Exch(\cdot)$ terms by the dominant one.
We will also use that to absorb terms of the form $\Oh{\Tstart\log p}$ and $\Oh{\Tword r}$
since these are obvious lower bounds for the data exchange as well.
A lower bound in the single-ported model
for $\Exch(P,h,r)$ is $h\Tword+r\Tstart$ if data is delivered directly. There are reasons to believe that we can come
close to this but we are not aware of actual matching upper
bounds. There are offline scheduling algorithms which can deliver the data using
time $h\Tword$ when startup overheads are ignored (using edge coloring of
bipartite multi-graphs). However, this chops messages into many blocks and also
requires us to run a parallel edge-coloring algorithm. 

\subsection{Multiway Merging and Partitioning}

Sequential multiway merging of $r$ sequences with total length $N$ can be done
in time $\Oh{N\log r}$. An efficient practical implementation may use tournament
trees \cite{Knu98short,San00b,SSP07}. If $r$ is small enough, this is even cache
efficient, i.e., it incurs only $\Oh{N/B}$ cache faults where $B$ is the cache
block size. If $r$ is too large, i.e., $r>M/B$ for cache size $M$, then a
multi-pass merging algorithm may be advantageous. One could even consider a cache
oblivious implementation \cite{BFV04}.

The dual operation for sample sort is partitioning the data according to $r-1$
splitters. This can be done with the same number of comparisons and similarly
cache efficiently as $r$-way merging but has the additional advantage that it
can be implemented without causing branch mispredictions \cite{SW04}.

\section{More Related Work}\label{s:related}

\begin{figure*}\normalsize\centering
  \input{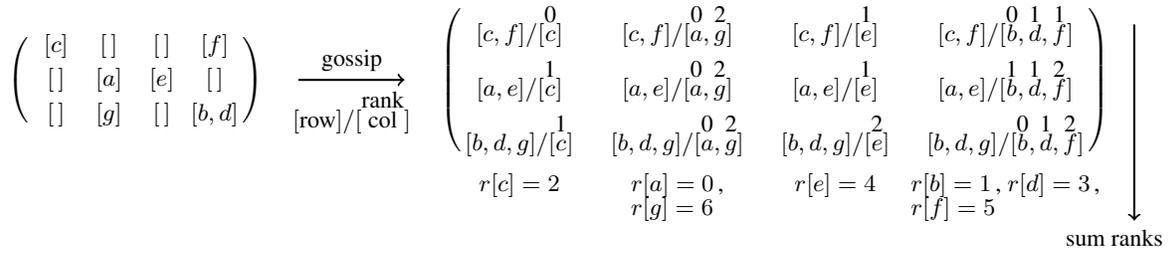}
  \caption{Example calculations done during fast work inefficient sorting algorithm on a $3 \times 4$ array of processors. The entries in the matrix on the right show elements received from the particular row and column during the allGather, and the corresponding calculated ranks.}\label{fig:fast inefficient}
\end{figure*}

\begin{figure}[b!]
\begin{code}
\Rem{select element with global rank $k$}\\
\Procedure multiSelect$(d_1,\ldots,d_{p}, k)$\+\\
  \If $\sum_{1\leq i\leq p}|d_i| = 1$ \Then \RRem{base case}\+\\
    \Return the only nonempty element\-\\
  select a pivot $v$\RRem{e.g. randomly}\\
  \ForFromToParallel{i}{1}{p}\+\\
    find $j_i$ such that $d_i[1..j_i] < v$ and $d[j_i+1..]\geq v$\-\\
  \If $\sum_{1\leq i\leq p}|j_i| \geq k$ \Then\+\\ \Return multiSelect$(d_1[1..j_1], \ldots, d_{p}[1..j_{p}], k)$\-\\
  \Else\+\\ \Return multiSelect$(d_1[j_1+1..], \ldots, d_{p}[j_{p}+1..],$\+\\
              $k-\sum_{0\leq i<p}|j_i|)$\\
\end{code}
\caption{\label{alg:musel}Multisequence selection algorithm.}
\end{figure}

Li and Sevcik \cite{LiSev94} describe the idea of overpartitioning.  However,
they use centralized sorting of the sample and a master worker load balancer
dealing out buckets for sorting in order of decreasing bucket size. 
This leads to very good load balance but is not scalable enough for our purposes and
heuristically disperses buckets over all PEs. Achieving the more strict output
format that our algorithm provide would require an additional complete data exchange.
Our AMS-sort from Section~\ref{s:sample} is fully parallelized without sequential 
bottlenecks and optimally partitions consecutive ranges of buckets.

A state of the art practical parallel sorting algorithm is described by
Solomonik and Kale \cite{SolKal10}. This single level algorithm can be viewed as
a hybrid between multiway mergesort and (deterministic) sample sort.
Sophisticated measures are taken for overlapping internal work and
communication.  TritonSort \cite{RasEtAl11} is a very successful sorting
algorithm from the database community. TritonSort is a version of single-level
sample-sort with centralized generation of splitters.

\section{Building Blocks}\label{s:building}

\subsection{Multisequence Selection}\label{ss:mss}

In its simplest form, given sorted sequences $d_1,\ldots,d_p$ and a rank $k$,
multisequence selection asks for finding an element $x$ with rank $k$ in the
union of these sequences. If all elements are different, $x$ also defines
positions in the sequences such that there is a total number of $k$ elements to
the left of these positions.

There are several algorithms for multisequence selection,
e.g. \cite{VSIR91,SSP07}.  Here we propose a particularly simple and intuitive
method based on an adaptation of the well-known quick-select algorithm
\cite{Hoa61b,MehSan08}.  This algorithm may be folklore. The algorithm has also
been on the publicly available slides of Sanders' lecture on parallel algorithms
since 2008 \cite{San08VLPA}.  Figure~\ref{alg:musel} gives high level pseudo
code.  The base case occurs if there is only a single element (and $k=1$).
Otherwise, a random element is selected as a pivot. This can be done in parallel
by choosing the same random number between 1 and $\sum_{i}|d_i|$ on all
PEs. Using a prefix sum over the sizes of the sequences, this element can be
located easily in time $\Oh{\Tstart\log p}$. Where ordinary quickselect has to
partition the input doing linear work, we can exploit the sortedness of the
sequences to obtain the same information in time $\Oh{\log D}$ with $D :=
\max_i|d_i|$ by doing binary search in parallel on each PE. If items are evenly
distributed, we have $D = \Th{\frac{n}{p}}$, and thus only time
$\mathcal{O}(\log\frac{n}{p})$ for the search, which partitions all the
sequences into two parts.  Deciding whether we have to continue searching in the
left or the right parts needs a global reduction operation taking time
$\Oh{\Tstart\log p}$. The expected depth of the recursion is
$\Oh{\log\sum_i|d_i|}=\Oh{\log n}$ as in ordinary quickselect. Thus, the overall
expected running time is $\Oh{(\Tstart\log p+\log\frac{n}{p})\log n}$.

In our application, we have to perform $r$ simultaneous executions of multisequence
selection
on the same input sequences but on $r$ different rank values. The involved
collective communication operations will then get a vector of length $r$ as
input and their running time using an asymptotically optimal implementation
is $\Oh{r\Tword+\Tstart\log p}$ \cite{BalEtAl95,SST09}. Hence, the overall
running time of multisequence selection becomes
\begin{equation}\label{eq:time multiselect}
  \Oh{(\Tstart\log p+r\Tword+r\log\tfrac{n}{p})\log n}\punkt
\end{equation}

\subsection{Fast Work Inefficient Sorting}\label{ss:fastsort}

We generalize an algorithm from \cite{IKS09} which may also be considered
folklore.  In its most simple form, the algorithm arranges $n^2$ PEs as a square
matrix using PE indices from $1..n\times 1..n$. Input element $i$ is assumed to
be present at PE $(i,i)$ initially. The elements are first broadcast along rows
and columns. Then, PE $(i,j)$ computes the result of comparing elements $i$ and
$j$ (0 or 1). Summing these comparison results over row $i$ yields the rank of
element $i$.

Our generalization works for a rectangular $a\times b$ array of processors
where $a=\Oh{\sqrt{p}}$ and $b=\Oh{\sqrt{p}}$. In particular, when $p=2^P$ is a
power of two, then $a=2^{\ceil{P/2}}$ and $b=2^{\floor{P/2}}$.  Initially, there
are $n$ elements uniformly distributed over the PEs, i.e. each PE has at most
$\ceil{n/p}$ elements as inputs. These are first sorted locally in time
$\Oh{\frac{n}{p}\log\frac{n}{p}}$.

Then the locally sorted elements are gossiped (allGather) along both rows and
columns (see Figure~\ref{fig:fast inefficient}),
making sure that the received elements are sorted. This can be
achieved in time \linebreak\mbox{$\Oh{\Tstart\log p+\Tword\frac{n}{\sqrt{p}}}$}. For example, if the
number of participating PEs is a power of two, we can use the well known
hypercube algorithm for gossiping (e.g., \cite{KumEtAl94}). The only
modification is that received sorted sequences are not simply concatenated but
merged.
\footnote{For general $p$, we can also use a gather algorithm along a binary
  tree and finally broadcast the result.}

\begin{figure*}\normalsize\centering
  \input{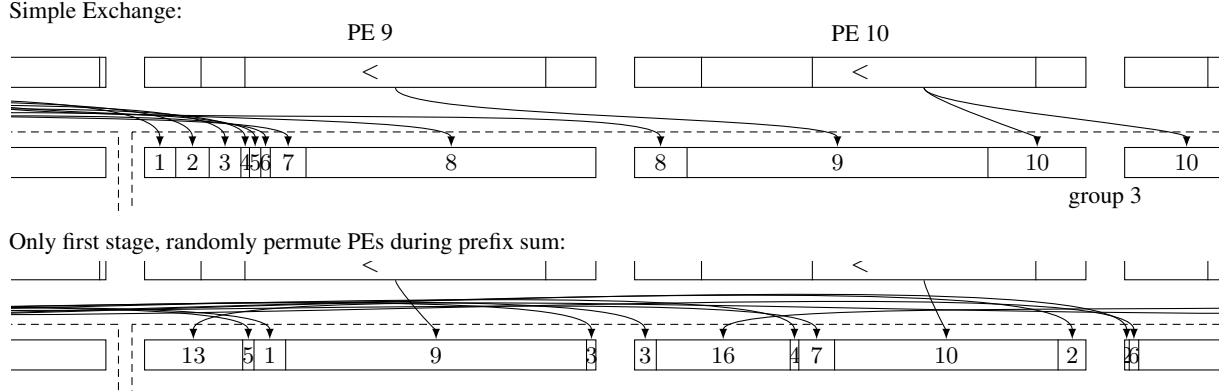}
  \caption{Exchange schema without and with first stage: permutation of PEs}\label{fig:exchange schema1}
\end{figure*}

\begin{figure*}[t]\normalsize\centering
  \input{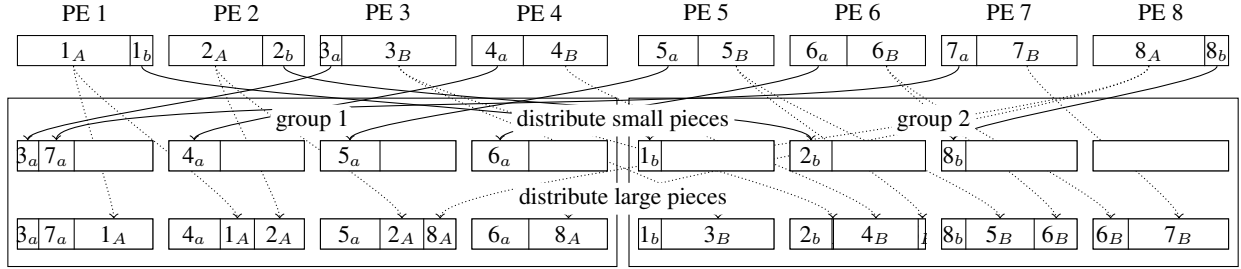}
  \vspace{1ex}
  \caption{Deterministic data delivery schema}\label{fig:deterministic exchange schema}
\end{figure*}

Elements received from column $i$ are then ranked with  respect to the elements received
from row $j$. This  can be done in time $\Oh{\frac{n}{\sqrt{p}}}$ by merging these
two sequences. Summing these local ranks along rows then yields the global rank of each
element.  If desired, this information can then be used for routing the input
elements in such a way that a globally sorted output is achieved. In our
application this is not necessary because we want to extract
elements with certain specified ranks as a sample\frage{todo: check. Answer: We just propagate those elements (to all PEs) whose rank fits.}.
Either way, we get overall execution time
\begin{equation}
\OhL{\Tstart\log p+\Tword\tfrac{n}{\sqrt{p}}+\tfrac{n}{p}\log\tfrac{n}{p}}\punkt
\end{equation}
Note that for $n$ polynomial in $p$ this bound is $\Oh{\Tstart\log
  p+\Tword\frac{n}{\sqrt{p}}}$. This restrictions is fulfilled for all
reasonable applications of this sorting algorithm.

~

\subsection{Delivering Data to the Right Place}\label{ss:deliver}

In the sorting algorithms considered here we face the following data
redistribution problem: Each PE has partitioned its locally present data into
$r$ pieces. The pieces with number $i$ have to be moved to PE group $i$ which
consists of PEs $(i-1)r+1..ir$. Each PE in a group should receive the same
amount of data except for rounding issues. 

We begin with a simple approach and then refine it in order to handle bad cases.
The basic idea is to compute a prefix sum over the
piece sizes -- this is a vector-valued prefix sum with vector length $r$.  As a
result, each piece is labeled with a range of positions within the group it
belongs to. Positions are numbers between $1$ and $m_i$
where $m_i\leq n/r$ is the number of elements assigned to group $i$. 
An element with number $j$ in group $i$ is sent to PE $(i-1)\frac{p}{r}+\lceil\frac{j}{m_i}\rceil$.
This way, each PE sends exactly $n/p$ elements and receives
at most $\ceil{m_ir/p}$ elements. Moreover, each piece is sent to one
or two target PEs responsible for handling it in the recursion.  Thereby, each
PE \emph{sends} at most $2r$ messages for the data exchange. Unfortunately, the
number of \emph{received messages}, although the same on the average, may vary
widely in the worst case. There are inputs where some PEs have to \emph{receive} $\Om{p}$
very small pieces. This happens when many consecutively numbered PEs send only
very small pieces of data (see PE 9 in the top of Figure~\ref{fig:exchange schema1}).

One way to limit the number of received pieces is to use randomization. We
describe how to do this while keeping the data perfectly balanced.  We describe
this approach in two stages where already the first, rather simple stage gives
a significant improvement.  The first stage is to choose the PE-numbering used
for the prefix sum as a (pseudo)random permutation within each group (see
Appendix~\ref{app:randperm}). However, it can be shown that if all but $p/r$
pieces are very small, this would still imply a logarithmic factor higher
startup overheads for the data exchange at some PEs. In
Appendix~\ref{app:randomDelivery} we give an advanced randomized algorithm that
gets rid of this logarithmic factor.  But now we give an algorithm that is
deterministic and at least conceptually simpler.

\subsubsection{A Deterministic Solution}\label{sss:deterministic}

The basic idea is to distribute small and large pieces separately. 
In Figure~\ref{fig:deterministic exchange schema} we illustrate the process.
First, small pieces of size at most $n/2pr$ are enumerated using a prefix
sum. Small piece $i$ of group $j$ is assigned to PE $\floor{i/r}$ of group
$j$. This way, all small pieces are assigned without having to split them and no
receiving PE gets more than half its final load.

In the second phase, the remaining (large) pieces are assigned taking the
residual capacity of the PEs into account. PE $i$ sends the description of its
piece for group $j$ to PE $\floor{i/r}$ of group $j$. This can be done in time
$\Exch(p,\Oh{r},r)$.  Now, each group produces an assignment of its large pieces
independently, i.e., each group of $p/r$ PEs assigns up to $p$ pieces
-- $r$ on each PE. In the following, we describe the assignment process for
a single group.

\begin{figure*}[t]\normalsize\centering
  \input{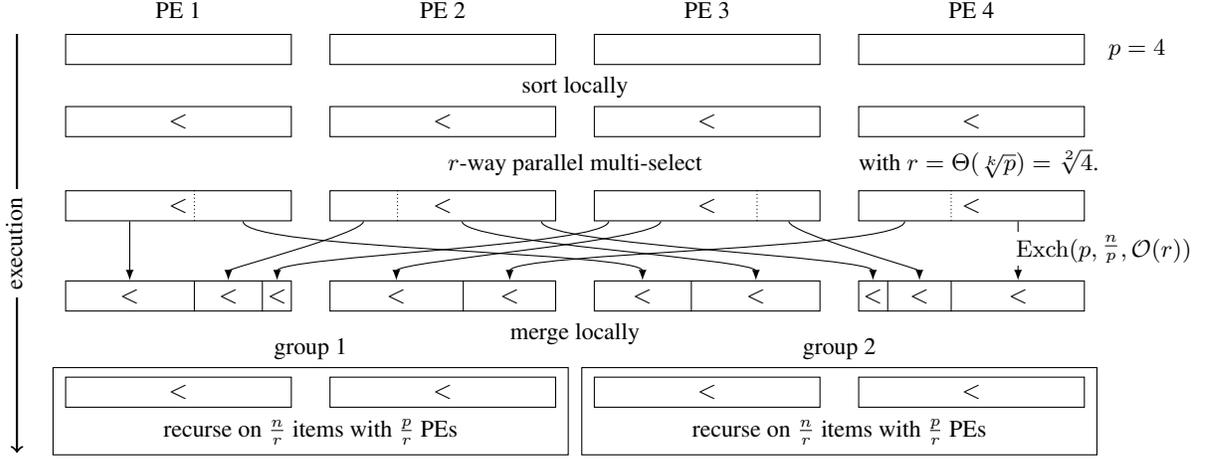}
  \vspace*{6pt}
  \caption{Algorithm schema of Recurse Last Parallel Multiway Mergesort}\label{fig:rlm-schema}
\end{figure*}

Conceptually, we enumerate the unassigned elements on the one hand and the
unassigned slots able to take them on the other hand and then map element $i$
to slot $i$.%
\footnote{A similar approach to data redistribution is described in
  \cite{HMS15}. However, here we can exploit special properties of the input to
  obtain a simpler solution that avoids segmented gather and scatter
  operations.} To implement this, we compute a prefix sum of the residual
capacities of the receiving PEs on the one hand and the sizes of the unassigned
pieces of the other hand. This yields two sorted sequences $X$ and $Y$
respectively which are merged in order to locate the destination PEs of each
large piece. Assume that ties in values of $X$ and $Y$ are broken such that
elements from $X$ are considered smaller.  In the merged sequence, a
subsequence of the form $\seq{x_i,y_j,\ldots,y_{j+k},x_{i+1},z}$ indicates that
pieces $j,\ldots,j+k$ have to moved to PE $i$. Piece $j+k$ may also wrap over
to PE $i+1$, and possibly to PE $i+2$ if $z=x_{i+2}$. The assumptions on the
input guarantee that no further wrapping over is possible since no piece can be
larger than $n/p$ and since every PE has residual capacity at least
$\frac{n}{2p}$.  Similarly, since large pieces have size at least $n/2pr$ and
each PE gets assigned at most $n/p$ elements, no PE gets more than
$\frac{n}{p}/\frac{n}{2pr}=2r$ large pieces.

The difficult part is merging the two sorted sequences $X$ and $Y$. Here one
can adapt and simplify the work efficient parallel merging algorithm for EREW
PRAMs from \cite{HagRue89}.  Essentially, one first merges the $p/r$ elements
of $X$ with a deterministic sample of $Y$ -- we include the prefix sum for the
first large piece on each PE into $Y$.  This merging operation can be done in
time $\Oh{\Tstart\log(p/r)}$ using Batcher's merging network \cite{Batcher68}.
Then each element of $X$ has to be located within the $\leq r$ local elements
of $Y$ on one particular PE.  Since it is impossible that these pieces (of
total size $\leq rn/p$) fill more than $2r$ PEs (of residual capacity $>n/2p$),
each PE will have to locate only $\Oh{r}$ elements. This can be done using
local merging in time $\Oh{r}$. In other words, the special properties of the
considered sequence make it unnecessary to perform the contention resolution
measures making \cite{HagRue89} somewhat complicated.  Overall, we get the
following deterministic result (recall from Section~\ref{ss:model} that
$\ExchTilde(\cdot)$ also absorbs terms of the form
$\Oh{\Tstart\log p+\Tword r}$).

\begin{theorem}
Data delivery of $r\times p$ pieces to $r$ parts can be implemented to run in time
$$\ExchTilde(p,\tfrac{n}{p},\Oh{r})\punkt$$
\end{theorem}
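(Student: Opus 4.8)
The plan is to bound the three parts of the algorithm of Section~\ref{sss:deterministic} separately --- the distribution of the small pieces, the computation of the assignment of the large pieces, and the final bulk exchange of the actual data --- and to check that everything apart from one term $\Exch(p,\Oh{n/p},\Oh{r})$ has the form $\Oh{\Tstart\log p+\Tword r}$ or $\Exch(p,\Oh{r},r)$ and is hence swallowed by $\ExchTilde(\cdot)$ (recall the convention fixed in Section~\ref{ss:model}).

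\emph{Small pieces.} First I enumerate, per group, the pieces of size at most $\frac{n}{2pr}$ by a vector-valued prefix sum of vector length $r$, costing $\Oh{\Tstart\log p+\Tword r}$. Small piece $i$ of group $j$ goes to PE $\floor{i/r}$ of group $j$, so each receiving PE gets at most $r$ consecutive small pieces, never split, of total size at most $r\cdot\frac{n}{2pr}=\frac{n}{2p}$. Hence every PE sends and receives at most $r$ small-piece messages and receives at most $\frac{n}{2p}$ words of small-piece data; crucially, every receiving PE is left with residual capacity more than $\frac{n}{2p}$.

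\emph{Large pieces.} Scattering the at most $r$ large-piece descriptions of each PE to PE $\floor{i/r}$ of the respective group costs $\Exch(p,\Oh{r},r)$. Inside one group of $p/r$ PEs I then compute the two sorted sequences $X$ (prefix sums of the $p/r$ residual capacities) and $Y$ (prefix sums of the at most $p$ large-piece sizes, ordered by source PE) and merge them; a block $\seq{x_i,y_j,\dots,y_{j+k},x_{i+1},z}$ read off the merged sequence tells us that pieces $j,\dots,j+k$ are routed to PE $i$ of the group, with piece $j+k$ possibly spilling to PE $i+1$ and, if $z=x_{i+2}$, to PE $i+2$ --- no further spilling occurs because no piece exceeds $\frac{n}{p}$ while residual capacities exceed $\frac{n}{2p}$, and no PE is given more than $2r$ large pieces since each large piece has size at least $\frac{n}{2pr}$ while a PE gets at most $\frac{n}{p}$ elements. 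The only delicate step --- and the main obstacle --- is to perform this merge in time $\Oh{\Tstart\log p+\Tword r}$ rather than invoking the full contention-resolution machinery of \cite{HagRue89}. I would do it in two stages: first merge $X$ with the deterministic sample of $Y$ formed by the prefix-sum value of the first large piece on each of the $p/r$ PEs; this merges two presorted sequences of length $\Oh{p/r}$ holding one element per PE and can be done with Batcher's merging network in time $\Oh{\Tstart\log(p/r)}$, after which each element of $X$ knows the PE whose at most $r$ local $Y$-values bracket it. Now the special structure is exactly what renders \cite{HagRue89}'s contention handling unnecessary: the at most $r$ pieces held by one PE have total size at most $r\cdot\frac{n}{p}$, so, since residual capacities exceed $\frac{n}{2p}$, they span at most $2r$ receiving PEs; equivalently only $\Oh{r}$ elements of $X$ fall into any single PE's block of $Y$-values, and they are placed there by a local merge in time $\Oh{r}$. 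Sending the resulting $\Oh{r}$ routing decisions per PE back to the senders again costs $\Oh{\Tstart\log p+\Tword r}$.

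\emph{Combining.} After both phases every PE knows where to send each of its at most $r$ pieces: a small piece to one PE and a large piece to at most three PEs, so $\Oh{r}$ outgoing messages carrying exactly $\frac{n}{p}$ words in total; and every PE receives at most $r$ small plus $2r$ large pieces, i.e. $\Oh{r}$ incoming messages carrying at most $\frac{n}{2p}+\frac{n}{p}=\Oh{n/p}$ words (using $m_i\le n/r$). The bulk exchange is thus an $\Exch(p,\Oh{n/p},\Oh{r})$, and adding it to the $\Oh{\Tstart\log p+\Tword r}$ and $\Exch(p,\Oh{r},r)$ contributions above yields $\ExchTilde(p,\frac{n}{p},\Oh{r})$.
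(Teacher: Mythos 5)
Your proposal is correct and follows essentially the same route as the paper: the paper's argument for this theorem is exactly the algorithm description in Section~\ref{sss:deterministic} -- separate handling of small pieces (prefix sum, no splitting, at most half the final load) and large pieces (delegating descriptions in $\Exch(p,\Oh{r},r)$, merging the capacity/size prefix sums $X$ and $Y$ via Batcher merging of $X$ against a one-element-per-PE sample of $Y$ plus $\Oh{r}$-time local merges, with the same $2r$-pieces and bounded wrap-over arguments) -- followed by absorbing the $\Oh{\Tstart\log p+\Tword r}$ and description-exchange terms into $\ExchTilde(p,\tfrac{n}{p},\Oh{r})$. Your bookkeeping of sent/received messages and volume matches the paper's (at the same level of rigor), so there is nothing to add.
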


\section{Generalizing Multilevel Mergesort (RLM-Sort)}\label{s:merge}

We subdivide the PEs into ``natural'' groups of size $p'$ on which we want to
recurse.  Asymptotically, $r \Is p/p'$ around $\sqrt[k]{p}$ is a good
choice if we want to make $k$ levels of recursion. However, we may also fix $p'$
based on architectural properties. For example, in a cluster of many-core
machines, we might chose $p'$ as the number of cores in one node. Similarly, if
the network has a natural hierarchy, we will adapt $p'$ to that situation. For
example, if PEs within a rack are more tightly connected than inter-rack
connections, we may choose $p'$ to be the number of PEs within a rack. Other
networks, e.g., meshes or tori have less pronounced cutting points. However, it
still makes sense to map groups to subnetworks with nice properties, e.g.,
nearly cubic subnetworks. For simplicity, we will assume that $p$ is divisible
by $p'$, and that $r = \Th{\sqrt[k]{p}}$.

There are several ways to define multilevel multiway mergesort. We describe a
method we call ``recurse last'' (see Figure~\ref{fig:rlm-schema}) that needs to
communicate the data only $k$ times and avoids problems with many small
messages.  Every PE sorts locally first. Then each of these $p$ sorted
sequences is partitioned into $r$ pieces in such a way that the sum of these
piece sizes is $n/r$ for each of these $r$ resulting \emph{parts}. In contrast
to the single level algorithm, we run only $r$ multisequence selections in
parallel and thus reduce the bottleneck due to multisequence selection by a
factor of $p'$.

Now we have to move the data to the responsible groups.  We defer to
Section~\ref{ss:deliver} which shows how this is possible using time
$\ExchTilde(p,\frac{n}{p},\Oh{\sqrt[k]{p}})$.

Afterwards, group $i$ stores elements which are no larger than any element in
group $i+1$ and it suffices to recurse within each group. However, we do not
want to ignore the information already available -- each PE stores not an
entirely unsorted array but a number of sorted sequences. This information is
easy to use though -- we merge these sequences locally first and obtain locally
sorted data which can then be subjected to the next round of splitting.

\begin{figure*}[t]\normalsize\centering
  \input{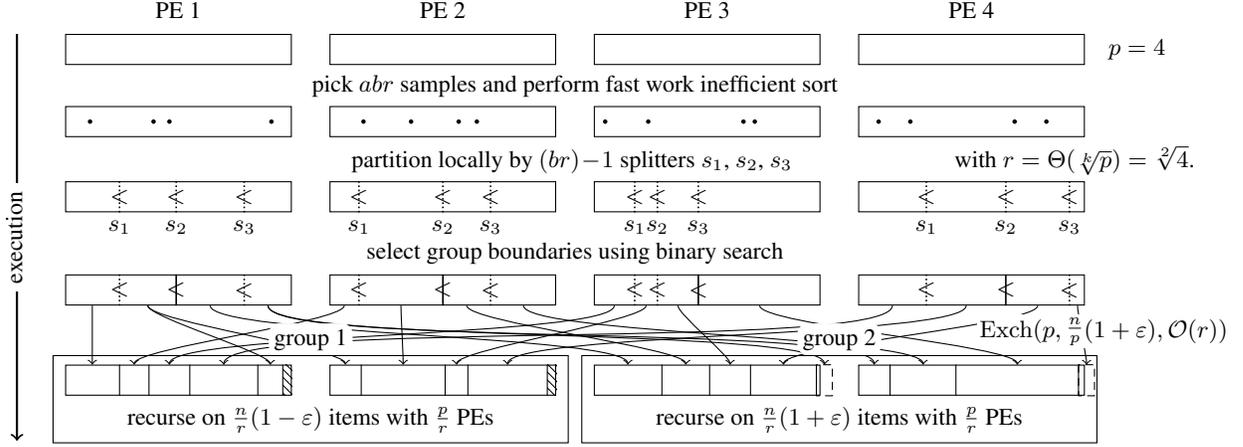}
  \vspace{1ex}
  \caption{Algorithm schema of AMS-sort}\label{fig:algo schema ams}
\end{figure*}

\begin{theorem}
  RLM-sort with $k=\Oh{1}$ levels of
  recursion can be implemented to run in time
  \begin{align}
  &\OhL{\left(\Tstart\log p+\sqrt[k]{p}\; \Tword + \sqrt[k]{p}\, \log\tfrac{n}{p}+\tfrac{n}{p}\right)\log n}+\nonumber\\
   & {\sum_{i=1}^k}\,\ExchTilde\left(p^{\frac{i}{k}},\tfrac{n}{p},\OhL{\sqrt[k]{p}}\right)\label{eq:RLM}\punkt
  \end{align}
\end{theorem}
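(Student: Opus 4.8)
The plan is to analyze RLM-sort level by level, bound each level using the multisequence-selection cost~\eqref{eq:time multiselect} and the data-delivery theorem of Section~\ref{sss:deterministic} with the parameters valid at that level, and sum over the $k$ levels. Throughout, the analysis maintains the following invariant: when level $i$ ($1\le i\le k$) starts, the PEs are partitioned into groups of $P_i\Def p/r^{i-1}=p^{(k-i+1)/k}$ PEs, each group owns exactly $n/r^{i-1}$ elements perfectly balanced ($n/p$ per PE), and the local data on each PE forms $\Oh{r}$ sorted runs --- a single run at level $1$ after the initial local sort, and the $\Oh{r}$ pieces received during the previous delivery step for $i\ge2$. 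The work at level $i$ is: merge the $\Oh{r}$ runs on each PE locally (for $i\ge2$); run $r$ simultaneous multisequence selections on the $P_i$ sorted sequences of the group to locate the global ranks $jn/r^i$, $j=1,\ldots,r-1$, so that part $s$ consists precisely of the elements of global rank in $((s-1)n/r^i,\,sn/r^i]$; deliver part $s$ to the $s$-th subgroup of $P_{i+1}=P_i/r$ PEs; recurse inside each subgroup. Since the split is by exact global rank on sorted sequences, the $r$ parts respect the required ordering between groups and each subgroup receives exactly $n/r^i$ elements; the delivery theorem then spreads these over $P_{i+1}$ PEs with $\le\ceil{n/p}$ each and hands every PE $\Oh{r}$ sorted pieces (at most $r$ ``small'' and at most $2r$ ``large''). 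Hence the invariant is preserved; after $k$ levels each group is a single PE holding $n/p$ elements in $\Oh{r}$ runs, which one final local merge turns into sorted output, and correctness follows because the ordering between groups is enforced at every level.

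\textbf{Per-level costs.} For local work: the one initial sort costs $\Oh{\frac{n}{p}\log\frac{n}{p}}$, and each of the $\Oh{k}$ local merges of $\Oh{r}$ runs of total length $n/p$ costs $\Oh{\frac{n}{p}\log r}$, for a total of $\Oh{k\frac{n}{p}\log r}=\Oh{\frac{n}{p}\log p}$ since $k\log r=\log r^{k}=\Th{\log p}$; together this is $\Oh{\frac{n}{p}(\log\frac{n}{p}+\log p)}=\Oh{\frac{n}{p}\log n}$. Multisequence selection at level $i$ runs $r$ selections over $P_i\le p$ sorted sequences with $D=\max_i|d_i|=\Th{n/p}$ and recursion depth $\Oh{\log(n/r^{i-1})}=\Oh{\log n}$, so by~\eqref{eq:time multiselect} it costs $\Oh{(\Tstart\log p+r\Tword+r\log\frac{n}{p})\log n}$; summing over the $k=\Oh{1}$ levels leaves this unchanged. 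Data delivery at level $i$ moves $r$ pieces per PE among $P_i$ PEs with bottleneck volume $h=n/p$, hence by the Section~\ref{sss:deterministic} theorem runs in $\ExchTilde(P_i,\frac{n}{p},\Oh{r})$; reindexing by $j=k-i+1$ and using $r=\Th{\sqrt[k]{p}}$ gives $\sum_{i=1}^{k}\ExchTilde(p^{i/k},\frac{n}{p},\Oh{\sqrt[k]{p}})$. The length-$r$ vector-valued prefix sums used inside these steps each cost $\Oh{r\Tword+\Tstart\log p}$ and are absorbed.

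\textbf{Combining.} Adding the local-work bound, the multisequence-selection bound with $r=\Th{\sqrt[k]{p}}$, and the delivery sum reproduces~\eqref{eq:RLM}: the $\frac{n}{p}\log n$ of local work enters the first parenthesis as its $\frac{n}{p}$ term times $\log n$, and (recall the convention of Section~\ref{ss:model} absorbing $\Oh{\Tstart\log p}$ and $\Oh{\Tword r}$ into $\ExchTilde(\cdot)$) the prefix-sum and collective contributions are dominated. The step I expect to need the most care is establishing and preserving the per-level invariant --- that perfect balance keeps $D=h=\Th{n/p}$ at every level and that after every delivery a PE holds only $\Oh{r}$ sorted runs --- since this is exactly what licenses instantiating~\eqref{eq:time multiselect} and the delivery theorem with the claimed parameters at each level; the remainder is routine accounting together with the identity $k\log r=\Th{\log p}$ that makes the local-merge costs telescope into $\frac{n}{p}\log p\le\frac{n}{p}\log n$.
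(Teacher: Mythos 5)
Your proposal is correct and follows essentially the same route as the paper's own (outline) proof: local sorting plus the $k=\Oh{1}$ multisequence selections give the first term of \eqref{eq:RLM}, and the per-level data exchanges within groups of $p^{(k-i+1)/k}$ PEs, summed over the $k$ levels, give the $\ExchTilde$ sum. You simply spell out more carefully what the paper leaves implicit --- the per-level balance/run-count invariant and the $\Oh{\frac{n}{p}\log r}$ local merges whose total telescopes into $\Oh{\frac{n}{p}\log p}$ --- which is a sound refinement rather than a different argument.
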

\begin{proof}(Outline) Local sorting takes time $\Oh{\frac{n}{p}\log n}$.  For
  $k=\Oh{1}$ multiselections we get the bound from Equation~(\ref{eq:time
    multiselect}), $\Oh{(\Tstart\log p+r\Tword+r\log\tfrac{n}{p})\log n}$.
  Summing the latter two contributions, we get the first term of
  Equation~\eqref{eq:RLM}.

  In level $i$ of the recursion we have $r^i$ independent groups containing
  $\frac{p}{r^i} = \frac{p}{p^{i/k}} = p^{1 - \frac{i}{k}}$ PEs each. An
  exchange within the group in level $i$ costs
  $\ExchTilde(p^{1-i/k}, \frac{n}{p}, \Oh{\frac{r}{r^i}})$ time. Since all
  independent exchanges are performed simultaneously, we only need to sum over
  the $k$ recursive levels, which yields the second term of
  Equation~\eqref{eq:RLM}.
\end{proof}
Equation~(\ref{eq:RLM}) is a fairly complicated expression but using some
reasonable assumptions we can simplify it. If all communications are equally
expensive, the sum becomes $k\ExchTilde(p,\frac{n}{p},\Oh{\sqrt[k]{p}})$ -- we
have $k$ message exchanges involving all the data but we limit the number of
startups to $\Oh{\sqrt[k]{p}}$. On the other hand, on mesh or torus networks,
the first (global) exchange will dominate the cost and we get
$\ExchTilde(p,\frac{n}{p},\Oh{\sqrt[k]{p}})$ for the sum.  If we also assume
that data is delivered directly, $\Omega(\sqrt[k]{p})$ startups hidden in the
$\ExchTilde()$ term will dominate the $\Oh{\log^2p}$ startups in the remaining
algorithm.  We can assume that $n$ is bounded by a polynomial in $p$ --
otherwise, a traditional single-phase multi-way mergesort would be a better
algorithm. This implies that $\log n=\Th{\log p}$.  Furthermore, if
$n=\omega(p^{1+1/k}\log p)$ then $n/p=\omega(p^{\frac{1}{k}}\log p)$, and the
term $\Om{\Tword\frac{n}{p}}$ hidden in the data exchange term dominates the
term $\Oh{\Tword p^{\frac{1}{k}}\log n}$.  Thus Equation~(\ref{eq:RLM})
simplifies to $\Oh{\frac{n}{p}\log n}$ (essentially the time for internal
sorting) plus the data exchange term.

If we also assume $\Tstart$ and $\Tword$ to be constants and estimate
$\ExchTilde$-term as $\Oh{\frac{n}{p}}$, we get 
execution time 
$$\Oh{\sqrt[k]{p}\log^2p+\frac{n}{p}\log n}\punkt$$
From this, we can infer a $\Oh{p^{1+1/k}\log p}$ as isoefficiency function.

\section{Adaptive Multi-Level\\ Sample Sort (AMS-Sort)}\label{s:sample}

A good starting point is the multi-level sample sort algorithm by Gerbessiotis
and Valiant \cite{GerVal94}. However, they use centralized sorting of the
sample and their data redistribution may lead to some processors receiving
$\Om{p}$ messages (see also Section~\ref{ss:deliver}). We improve on this
algorithm in several ways to achieve a truly scalable algorithm. First, we sort
the sample using fast parallel sorting.  Second, we use the advanced data
delivery algorithms described in Section~\ref{ss:deliver}, and third, we give a
scalable parallel adaptation of the idea of overpartitioning \cite{LiSev94} in
order to reduce the sample size needed for good load balance.\TODO{Note Timo:
  also better cache footprint on the nodes.}

But back to our version of multi-level sample sort (see Figure~\ref{fig:algo
  schema ams}). As in RLM-sort, our intention is to split the PEs into $r$
groups of size $p'=p/r$ each, such that each group processes elements with
consecutive ranks.  To achieve this, we choose a random sample of size $abr$
where the \emph{oversampling factor} $a$ and the \emph{overpartitioning factor}
$b$ are tuning parameters.  The sample is sorted using a fast sorting
algorithm.  We assume the fast inefficient algorithm from
Section~\ref{ss:fastsort}. Its execution time is
$\Oh{\frac{abr}{p}\log\frac{abr}{p}+\Tword\frac{abr}{\sqrt{p}}+\Tstart\log p}$.

From the sorted sample, we choose $br-1$ splitter elements with equidistant
rank. These splitters are broadcast to all PEs. This is possible in time
$\Oh{\Tword br+\Tstart\log p}$.

Then every PE partitions its local data into $br$ \emph{buckets} corresponding
to these splitters. This takes time $\Oh{\frac{n}{p}\log(br)}$.

Using a global (all-)reduction, we then determine global bucket sizes in time
$\Oh{\Tword br+\Tstart\log p}$. These can be used to assign buckets to
PE-groups in a load balanced way: Given an upper bound $L$ on the number of
elements per PE-group, we can scan through the array of bucket sizes and skip
to the next PE-group when the total load would exceed $L$.  Using binary search
on $L$ this finds an optimal value for $L$ in time $\Oh{br\log n}$ using a
sequential algorithm.  In Appendix~\ref{app:scan} we explain how this can be
improved to $\Oh{br\log br}$ and, using parallelization, even to
$\Oh{br+\Tstart\log p}.$

\frage{or even better using parallization?}
\begin{lemma}
The above binary search scanning algorithm indeed finds the optimal $L$.
\end{lemma}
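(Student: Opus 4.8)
The plan is to reduce the statement to two standard facts about a greedy prefix scan and then use monotonicity to justify the binary search. First I would fix notation: let $s_1,\dots,s_m$ (with $m=br$) be the global bucket sizes in scan order, and for a candidate bound $L$ let $g(L)$ be the number of PE-groups the scanning procedure opens — it appends buckets to the current group until the next bucket would make the running sum exceed $L$, then opens a new group, and $g(L)=\infty$ if some single bucket has $s_i>L$ so that the scan gets stuck. With this notation the procedure is \emph{valid} for a given $L$ exactly when $g(L)\le r$, and the ``optimal $L$'' is $L^{*}:=\min\{L:g(L)\le r\}$; the lemma then amounts to showing that binary search over $L\in\{0,\dots,n\}$ using the test ``$g(L)\le r$?'' returns $L^{*}$.

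The first fact I would establish is \emph{monotonicity}: $g$ is non-increasing in $L$. Running the scan for two bounds $L\le L'$ side by side and arguing by induction on the group index, the $j$-th group boundary under $L'$ never lies strictly to the left of the $j$-th boundary under $L$ (a larger budget can only let a group absorb at least as many buckets), so $g(L')\le g(L)$. The second fact is \emph{greedy optimality}: for each fixed $L$, $g(L)$ equals the minimum number of contiguous parts, each of total size $\le L$, needed to cover $s_1,\dots,s_m$. The easy inequality is that the scan itself exhibits such a partition with $g(L)$ parts. For the reverse, take any optimal feasible partition with cut positions $c_1<\dots<c_q=m$ and the scan's cut positions $\gamma_1<\dots<\gamma_{g(L)}=m$; an induction on $j$ shows $\gamma_j\ge c_j$, because the scan never closes a group before it is forced to, so the scan uses at most $q$ groups and $g(L)\le q$.

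Finally I would assemble these: a valid contiguous assignment of the $m$ buckets to at most $r$ PE-groups with per-group load $\le L$ exists iff the covering number is $\le r$ iff $g(L)\le r$ (if the scan uses fewer than $r$ groups, padding with empty groups at the end is harmless), and by monotonicity the feasible set $F=\{L\in\{0,\dots,n\}:g(L)\le r\}$ is upward closed, and it is nonempty since $g(n)=1\le r$, so $F=\{L^{*},\dots,n\}$. A textbook binary search for the least element of a monotone predicate therefore terminates at $L^{*}$, which is the optimal bound. Since each test is one left-to-right scan costing $\Oh{br}$ time and $\Oh{\log n}$ tests suffice, we also recover the claimed $\Oh{br\log n}$ running time (the appendix then sharpens the per-test cost).

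I expect the only real obstacle to be the exchange argument underlying greedy optimality — the inductive proof of $\gamma_j\ge c_j$ — together with the minor bookkeeping for the degenerate cases, namely a bucket that by itself exceeds $L$ (handled by the $g(L)=\infty$ convention and the lower end of the search range) and the scan finishing with fewer than $r$ nonempty groups. Everything else — monotonicity, the reduction to a monotone-predicate binary search, and the running-time accounting — is immediate.
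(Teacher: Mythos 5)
Your proposal is correct and follows essentially the same route as the paper: your monotonicity fact (the $j$-th greedy boundary under a larger bound never lags behind, so the feasible set of $L$ is upward closed and binary search applies) is the paper's first invariant, and your greedy-dominance induction $\gamma_j\ge c_j$ against an arbitrary feasible partition is the paper's second induction on scanned prefix sizes, just phrased via cut positions instead of cumulative bucket sizes. The only difference is cosmetic packaging ($g(L)$ and a monotone-predicate search, plus explicit handling of the degenerate cases), not a different argument.
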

\begin{proof}
We first show that binary search suffices to find the optimal $L$ for which the
scanning algorithm succeeds. Let $L^*$ denote
this value. For this to be true, it suffices to show that for any $L\geq L^*$, the scanning
algorithm finds a feasible partition into groups with load at most $L$.
This works because  the scanning algorithm maintains the invariant that after defining $i$ groups,
the algorithm with bound $L$ has scanned at least as many buckets as the algorithm with bound $L^*$.
Hence, when the scanning algorithm with bound $L^*$ has defined all groups, the one with 
bound $L$ has scanned at least as many buckets as the algorithm with bound $L^*$.
Applying this invariant to the final group yields the desired result.

Now we prove that no other algorithm can find a better solution.
Let $L^*$ denote the maximum group size of an optimal partitioning algorithm.
We argue that the scanning algorithm with bound $L^*$ will succeed.
We now compare any optimal algorithm with the scanning algorithm. 
Consider the first $i$ buckets defined by both algorithms.
It follows by induction on $i$ that the total size $s^s_i$ of these buckets for the scanning algorithm
is at least as large as the corresponding value $s^*_i$ for the optimal algorithm:
This is certainly true for $i=0$ ($s^s_0=s^*_0=0$). 
For the induction step, suppose that the optimal algorithm chooses a bucket of size $y$, i.e., 
$s^*_{x+1}=s^*_x+y$. By the induction hypothesis, we know that $s^s_i\geq s^*_i$. Now suppose,
the induction invariant would be violated for $i+1$, i.e., $s^s_{i+1}<s^*_{i+1}$. 
Overall, we get $s^*_i\leq s^s_i < s^s_{i+1} < s^*_{i+1}$.
This implies that
$s^s_{i+1}-s^s_i$ -- the size of group $i+1$ for the scanning algorithm -- is smaller than $y$. 
Moreover, this group contains a proper subset of the buckets included by the optimal algorithm.
This is a impossible since there is no reason why the scanning algorithm should not at least 
achieve a bucket size $s^*_{i+1}-s^s_i\leq y\leq L^*$.
\end{proof}
\begin{lemma}
\label{lm:ams-param}
We can achieve $L=(1+\varepsilon)\frac{n}{r}$ with high probability
choosing appropriate $b=\Om{1/\varepsilon}$ and $ab=\Om{\log r}$.
\end{lemma}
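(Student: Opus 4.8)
The plan is to reduce the statement to the existence (with high probability) of a good partition of the $br$ buckets into $r$ groups and then to prove this via a Chernoff bound over a small number of fixed windows. Recall that the scanning algorithm of the preceding lemma computes the \emph{optimal} maximum group load $L^*$, so it is enough to exhibit, with high probability, \emph{some} assignment of contiguous ranges of buckets to the $r$ groups in which every group receives at most $(1+\varepsilon)\tfrac{n}{r}$ elements. The construction I would analyze is the obvious one: for $g\in\{1,\dots,r-1\}$ let $W_g$ be the block of the $\lceil\varepsilon\tfrac{n}{r}\rceil$ input elements whose ranks are closest to $g\tfrac{n}{r}$; if each $W_g$ contains at least one of the $br-1$ splitter elements, cut group $g$ immediately after such a splitter. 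Since the $W_g$ are pairwise disjoint (their width $\varepsilon\tfrac{n}{r}$ is below $\tfrac{n}{r}$ for $\varepsilon<1$), these cuts are automatically in rank order, and each resulting group then spans at most $\tfrac{n}{r}+\varepsilon\tfrac{n}{r}$ ranks. So the whole lemma follows once we show that every $W_g$ contains a splitter with high probability.

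For the tail bound I would first fix $b$ to be a sufficiently large constant multiple of $1/\varepsilon$, so that $\varepsilon\tfrac{n}{r}\ge c\cdot\tfrac{n}{br}$ for a constant $c>1$; here $\tfrac{n}{br}$ is the common expectation of a bucket's size, because a bucket consists of the input elements lying strictly between two consecutive of the $br-1$ sample order statistics used as splitters (together with the $a-1$ sample elements in that range), and the sample is drawn uniformly at random. The key observation is that the sample elements falling inside $W_g$ occupy \emph{consecutive} ranks in the sorted sample, and any run of $a$ consecutive sample ranks contains a multiple of $a$, i.e.\ a splitter; hence ``$W_g$ contains no splitter'' forces $|W_g\cap S|\le a-1$, where $S$ denotes the sample. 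This turns the geometric event about splitters into a pure counting event about a fixed set of $\lceil\varepsilon\tfrac{n}{r}\rceil$ input elements.

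Now $|W_g\cap S|$ is hypergeometrically distributed with mean $\mu_g\ge ab\varepsilon=ca$; hypergeometric random variables are negatively associated, so the Chernoff lower tail applies and $\prob{|W_g\cap S|\le a-1}\le\prob{|W_g\cap S|\le\mu_g/c}\le\exp(-\Th{ab\varepsilon})$. A union bound over the $r-1$ choices of $g$ then gives a total failure probability of at most $r\cdot\exp(-\Th{ab\varepsilon})$. Because $b=\Th{1/\varepsilon}$ the exponent is $\Th{a}$, so requiring $a=\Om{\log r}$ --- and hence in particular $ab=\Om{\log r}$ as in the statement --- makes the exponent dominate $\log r$, and the failure probability is $r^{-\Om{1}}$, which is the claimed high probability.

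The conceptual heart of the argument, and the step I expect to need the most care, is the reduction in the first paragraph: the adaptive scanning packing needs only each \emph{individual} bucket to be bounded by $\varepsilon\tfrac{n}{r}$ (equivalently, every length-$\varepsilon\tfrac{n}{r}$ window of inputs to contain a splitter), rather than needing a whole $b$-bucket block --- a sum of about $ab$ sample elements --- to deviate from its mean by only a $(1\pm\varepsilon)$ factor, which would cost $\Om{(\log r)/\varepsilon^{2}}$ sample per group. Bounding a single bucket instead only asks that the roughly $a$ sample elements delimiting it not be sparse by more than a constant factor; this is exactly what $b=\Om{1/\varepsilon}$ buys, and the extra logarithmic factor in $a$ pays for the union bound. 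Everything else --- checking that the disjoint windows yield cuts in increasing order, verifying $\mu_g\ge ca$ after the ceiling, and tracking the constants in the Chernoff estimate --- is routine.
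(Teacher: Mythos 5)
Your argument is sound in itself, but it takes a genuinely different route from the paper and, importantly, in a different parameter regime. The paper's own proof is only a sketch that analyzes the adaptive scanning algorithm directly: a Chernoff bound (with $ab=\Om{\log p}$) showing that no single bucket exceeds $\frac{n}{r}$, followed by an informal averaging argument that, since with $b\geq 2/\varepsilon$ most buckets are smaller than $\varepsilon\frac{n}{r}$, the greedy groups will mostly exceed $\frac{n}{r}$ and hence $r$ groups suffice. You instead exhibit an explicit feasible partition -- cut inside a window of width $\varepsilon\frac{n}{r}$ around each ideal boundary $g\frac{n}{r}$, provided each window contains a splitter -- and invoke the optimality of the scanning algorithm from the preceding lemma to transfer this to the bound on $L$. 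That reduction is correct and is actually more rigorous than the paper's sketch; the only technical slip is the claim that any run of $a$ consecutive sample ranks contains a splitter: the rank $abr$ is a multiple of $a$ but not one of the $br-1$ splitters, so a window whose sample elements are exactly the top $a$ order statistics is an exception. This event is negligible (it forces the top $\approx\frac{n}{2r}$ elements to contain at most $a$ samples) and folds into the same union bound, so it is a cosmetic fix.

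The substantive issue is quantitative. Requiring a splitter in every one of the $r-1$ \emph{fixed} windows needs at least $a$ samples per window, whose expected number is $ab\varepsilon$; for a whp union bound this forces $ab\varepsilon=\Om{a+\log r}$, and with your choice $b=\Th{1/\varepsilon}$, $a=\Th{\log r}$ the total sample size is $\Th{r\log(r)/\varepsilon}$. This does satisfy the literal $\Om{\cdot}$ conditions of the lemma, but it is not the parameterization the paper intends and uses later: the proof of Lemma~\ref{lem:ams} takes $ab=\Th{\max(\log r,1/\varepsilon)}$ with $b=\Th{1/\varepsilon}$, so that $a$ may be a constant when $1/\varepsilon\geq\log r$. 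In that regime your argument breaks down -- with constant $a$, a constant fraction of the prescribed windows will miss a splitter -- whereas the paper's (admittedly non-rigorous) argument exploits precisely the adaptivity of the scanning algorithm so that no splitter is needed near any prescribed cut point, only that most buckets are small; this is what buys the additive combination $\log r+1/\varepsilon$ instead of your multiplicative $\log(r)/\varepsilon$, a loss of a factor $\min(\log r,1/\varepsilon)$ in sample size. So either state explicitly that you prove the lemma under the stronger choice $ab=\Th{\log(r)/\varepsilon}$ (which still yields the claimed $\Oh{1/\varepsilon}$ dependence, but weakens the sample-size claim made after Lemma~\ref{lem:ams}), or strengthen the argument to handle adaptively placed cuts.
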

\begin{proof}We only give the basic idea of a proof.
  We argue that the scanning algorithm is likely to succeed with
  $L=(1+\varepsilon)\frac{n}{r}$ as a group size limit.
  Using Chernoff bounds it can be shown that
  $ab=\Om{\log p}$ ensures that no bucket has size larger than $\frac{n}{r}$ with
  high probability.
  Hence, the scanning algorithm can always build feasible PE
  groups from one or multiple buckets.  

  Choosing $b\geq 2/\varepsilon$ means that
  the expected bucket size is $\leq \frac{\varepsilon}{2}\cdot\frac{n}{r}$.
  Indeed, most elements will be in buckets of size less than $\varepsilon\frac{n}{r}$.
  Hence, when the scanning algorithm adds a bucket to a PE-group such that 
  the average group size $\frac{n}{r}$ is passed for the first time,
  most of the time this additional group will also fit below the limit of
  $(1+\varepsilon)\frac{n}{r}$. Overall, the scanning algorithm will mostly form
  groups of size exceeding $\frac{n}{r}$ and thus $r$ groups will suffice to
  cover all buckets of total size $n$. 
\frage{more formal proof?}
\end{proof}

The data splitting defined by the bucket group is then the input for the
data delivery algorithm described in Section~\ref{ss:deliver}.
This takes time
$\ExchTilde\left(p,(1+o(1))L,(2+o(1))r\right)$).

We recurse on the PE-groups similar to Section~\ref{s:merge}. Within
the recursion it can be exploited that the elements are already partitioned into
$br$ buckets.

We get the following overall execution time for one level:
\begin{lemma}\label{lem:ams}
One level of AMS-sort works in time
  \begin{align}
\OhL{\frac{n}{p}\log\frac{r}{\varepsilon}+\Tword\frac{r}{\varepsilon}}+
\ExchTilde(p,(1+\varepsilon)\tfrac{n}{p},\Oh{r})
\label{eq:AMS}\punkt
  \end{align}
\end{lemma}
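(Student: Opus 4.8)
The plan is to add up the running times of the phases that make up one level of AMS-sort — drawing the random sample and sorting it, selecting and broadcasting the $br-1$ splitters, partitioning the local data into $br$ buckets, the global reduction that produces the bucket sizes, the binary-search scan that fixes the group bound $L$, and finally the data delivery — and to show that every phase except the delivery collapses into the two $\Oh{\cdot}$ terms. First I would fix the tuning parameters as $b=\Th{1/\varepsilon}$ and $a=\Th{\varepsilon\log r}$, so that $b=\Om{1/\varepsilon}$ and $ab=\Om{\log r}$ as required by Lemma~\ref{lm:ams-param}; this yields $br=\Th{r/\varepsilon}$ and $abr=\Th{r\log r}$, and by that lemma the scanning algorithm finds, with high probability, an assignment with group load $L=(1+\varepsilon)\frac nr$ in which moreover no single bucket exceeds $\frac nr$.

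Next I would substitute these values into the per-phase bounds quoted in the text immediately before the lemma. Sampling plus fast inefficient sorting of the $abr$-element sample (Section~\ref{ss:fastsort}) costs $\Oh{\frac{abr}{p}\log\frac{abr}{p}+\Tword\frac{abr}{\sqrt p}+\Tstart\log p}$; selecting and broadcasting the splitters costs $\Oh{\Tword br+\Tstart\log p}$, as does the all-reduction of bucket sizes; local partitioning costs $\Oh{\frac np\log(br)}$; the scan costs $\Oh{br+\Tstart\log p}$ using the parallel variant of Appendix~\ref{app:scan} (which removes the $\log n$ factor of the naive version); and feeding the bucket-to-group assignment into the delivery algorithm of Section~\ref{ss:deliver} costs $\ExchTilde\bigl(p,(1+o(1))\tfrac{Lr}{p},(2+o(1))r\bigr)=\ExchTilde\bigl(p,(1+\varepsilon)\tfrac np,\Oh r\bigr)$, since in a group of $p/r$ PEs holding at most $L$ elements each PE sends $\tfrac np$ words and receives at most $(1+o(1))\tfrac{Lr}{p}=(1+o(1))(1+\varepsilon)\tfrac np$ words in $\Oh r$ messages.

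Then the simplification is mechanical. By the convention of Section~\ref{ss:model} the $\ExchTilde$ term absorbs every $\Oh{\Tstart\log p}$ and every $\Oh{\Tword r}$ contribution; in particular, since $abr=\Th{r\log r}$ and $\log r=\Oh{\sqrt p}$ for $p$ beyond a tiny constant, the sample-sort communication term $\Tword\frac{abr}{\sqrt p}=\Oh{\Tword r}$ is absorbed as well. Assuming the sample is no larger than the input, i.e.\ $abr=\Oh n$, we bound $\frac{abr}{p}\log\frac{abr}{p}\le\frac{abr}{p}\log(abr)=\Oh{\tfrac np\log(r\log r)}=\Oh{\tfrac np\log\tfrac r\varepsilon}$, which merges with the local-partitioning term $\Oh{\tfrac np\log(br)}=\Oh{\tfrac np\log\tfrac r\varepsilon}$. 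The surviving $\Tword br=\Th{\Tword\tfrac r\varepsilon}$ and the bare scan term $\Oh{br}=\Oh{\tfrac r\varepsilon}=\Oh{\Tword\tfrac r\varepsilon}$ (using $\Tword=\Om1$) give exactly the second $\Oh{\cdot}$ term, and collecting everything yields $\Oh{\tfrac np\log\tfrac r\varepsilon+\Tword\tfrac r\varepsilon}+\ExchTilde(p,(1+\varepsilon)\tfrac np,\Oh r)$ as claimed.

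The bookkeeping above is routine; the one step I expect to be the real obstacle is justifying that the delivery phase actually runs in $\ExchTilde(p,(1+\varepsilon)\tfrac np,\Oh r)$, because Section~\ref{ss:deliver} analysed the perfectly balanced output only. Here one has to re-examine the deterministic scheme with an $\varepsilon$-imbalanced target: the no-large-bucket guarantee of Lemma~\ref{lm:ams-param} keeps every piece a PE sends to a group bounded by $\tfrac np$, and (for $\varepsilon=\Oh1$, as is natural for an imbalance parameter) the greedy scan fills all but the last group to between roughly $\varepsilon\tfrac nr$ and $(1+\varepsilon)\tfrac nr$ elements, so the per-PE capacities stay $\Th{n/p}$ and the structural invariants of the scheme — no piece wrapping over more than a constant number of PEs, at most $\Oh r$ pieces landing on any PE, at most $(1+o(1))(1+\varepsilon)\tfrac np$ elements received per PE — survive with only constant-factor inflation. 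Granting this (it is exactly the time bound asserted for the delivery step in Section~\ref{s:sample}), the delivery theorem carries over verbatim with $h=(1+\varepsilon)\tfrac np$, which completes the proof.
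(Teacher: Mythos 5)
Your proposal follows essentially the same route as the paper's (outline) proof: sum the per-phase costs stated before the lemma with $b=\Th{1/\varepsilon}$ and $ab$ large enough for Lemma~\ref{lm:ams-param}, use the fast work-inefficient sorter for the sample, and absorb the $\Oh{\Tstart\log p}$ and small $\Oh{\Tword r}$ contributions into the $\ExchTilde$ term; your extra care about the delivery step under $(1+\varepsilon)$-imbalance only makes explicit what the paper leaves implicit. One small correction: choose $ab=\Th{\max(\log r,1/\varepsilon)}$ (as the paper does) rather than $a=\Th{\varepsilon\log r}$, since the latter can give $a<1$, in which case the $abr$ samples would not even suffice to define the $br-1$ splitters; with this corrected choice, $abr=\Th{r\max(\log r,1/\varepsilon)}$ and all of your absorption estimates go through unchanged.
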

\begin{proof}(Outline)
This follows from Lemma~\ref{lm:ams-param} and the individual running times
described above using $ab=\Th{\max(\log r, 1/\varepsilon)}$,
$b=\Th{1/\varepsilon}$, and fast inefficient sorting for sorting the sample. The
sample sorting term then reads
$\Oh{\frac{abr}{p}\log\frac{abr}{p}+\Tword\frac{abr}{\sqrt{p}}+\Tstart\log p}$
which is $o(\frac{n}{p}\log\frac{r}{\epsilon}+\frac{\Tword}{\varepsilon})+\Tstart\log p$.
Note that the term $\Tstart\log p$ is absorbed into the \ExchTilde-term.
\end{proof}

Compared to previous implementations of sample sort, including the one from Gerbessiotis and Valiant \cite{GerVal94},
AMS-sort improves the sample size from $\Oh{p\log p/\varepsilon^2}$ to
$\Oh{p(\log r+1/\varepsilon)}$ and the number of startup overheads in the $\Exch$-term
from $\Oh{p}$ to $\Oh{r}$.

In the base case of AMS-sort, when the recursion reaches a single
PE, the local data is sorted sequentially.

\begin{theorem}\label{thm:ams}
Adaptive multi-level sample sort (AMS-sort) with $k$ levels of
  recursion and a factor $(1+\varepsilon)$ imbalance in the output
can be implemented to run in time
  \begin{align*}
  \OhL{\frac{n}{p}\log n + \Tword \frac{k^2\sqrt[k]{p}}{\varepsilon}}+
   {\sum_{i=1}^k}\,\ExchTilde\left(p^{\frac{i}{k}},(1+\varepsilon)\tfrac{n}{p},\OhL{\sqrt[k]{p}}\right)
  \end{align*}
if $k=\Oh{\log p/\log\log p}$ and $\frac{1}{\varepsilon}=\Oh{\sqrt[k]{n}}$.
\end{theorem}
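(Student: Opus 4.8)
The plan is to unroll the $k$-level recursion and charge each level the cost of one level of AMS-sort given by Lemma~\ref{lem:ams}, instantiated with rescaled parameters. I would fix the number of groups at every level to $r=\sqrt[k]{p}$, so that after exactly $k$ levels each group has shrunk to a single PE, and I would pick a per-level imbalance $\varepsilon'=\Theta(\varepsilon/k)$ with $(1+\varepsilon')^k\le 1+\varepsilon$ (for instance $\varepsilon'=\ln(1+\varepsilon)/k$, which also gives $1/\varepsilon'=\Oh{k/\varepsilon}$). Since at each level both the elements and the PEs of a group are divided by $r$, an easy induction shows that the number of elements per PE entering level $i$ is at most $(1+\varepsilon')^{i-1}\tfrac{n}{p}\le(1+\varepsilon)\tfrac{n}{p}$; hence every subproblem is a valid AMS-sort instance with a \emph{uniform} bound $(1+\varepsilon)\tfrac np$ on its local load, and Lemma~\ref{lem:ams} applies throughout.

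Next I would carry out the cost accounting level by level. At level $i$ there are $r^{i-1}$ subproblems running concurrently on disjoint subnetworks of $p_i\Def p^{1-(i-1)/k}$ PEs, so the time charged to level $i$ is the Lemma~\ref{lem:ams} bound for one such subproblem, with $p\mapsto p_i$, $n\mapsto n_i$ where $n_i/p_i\le(1+\varepsilon')^{i-1}\tfrac np$, group count $r$, and imbalance $\varepsilon'$. Its exchange term is $\ExchTilde(p_i,(1+\varepsilon')\tfrac{n_i}{p_i},\Oh{r})$; since $(1+\varepsilon')\tfrac{n_i}{p_i}\le(1+\varepsilon)\tfrac np$, $r=\sqrt[k]p$, and the multiset $\{p_i\}_{i=1}^k$ equals $\{p^{1/k},p^{2/k},\dots,p^{k/k}\}$, summing over $i$ and using monotonicity of $\Exch$ in its data argument reproduces exactly the $\sum_{i=1}^k\ExchTilde(p^{i/k},(1+\varepsilon)\tfrac np,\Oh{\sqrt[k]{p}})$ term; here $\ExchTilde$ also absorbs the $\Oh{\Tstart\log p}$ contributions from the broadcasts, reductions, and multisequence selections at each level. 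The $\Tword$ terms contribute $\sum_{i=1}^k\Tword\cdot\Theta(r/\varepsilon')=\Theta(\Tword\,k^2\sqrt[k]{p}/\varepsilon)$, which is the theorem's second summand. The sample-sorting cost of Lemma~\ref{lem:ams} is $o(\cdot)$ of the retained per-level terms and stays lower order after summation, while the base-case sort of at most $(1+\varepsilon)\tfrac np$ elements costs $\Oh{\tfrac np\log n}$.

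The step I expect to be the real obstacle is showing that the local-partitioning work telescopes to $\Oh{\tfrac np\log n}$ rather than to something like $\Oh{k\tfrac np\log\tfrac r\varepsilon}$. At level $i$ each PE partitions $\Oh{\tfrac np}$ elements into $br=\Theta(kr/\varepsilon)$ buckets in time $\Oh{\tfrac np\log(br)}$, and $\log(br)=\Oh{\tfrac1k\log p+\log k+\log\tfrac1\varepsilon}$. Here the two hypotheses of the theorem are exactly what is needed: $1/\varepsilon=\Oh{\sqrt[k]{n}}$ turns $\log\tfrac1\varepsilon$ into $\Oh{\tfrac1k\log n}$, and $k=\Oh{\log p/\log\log p}$ makes $k\log k=\Oh{\log p}=\Oh{\log n}$. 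Hence $\sum_{i=1}^k\Oh{\tfrac np\log(br)}=\Oh{\tfrac np(\log n+k\log k)}=\Oh{\tfrac np\log n}$, which together with the base-case sort yields the $\tfrac np\log n$ term. (For the $\log$ of the per-PE load itself one also uses $\log((1+\varepsilon)\tfrac np)=\Oh{\log n}$, which holds in the regime of interest $\varepsilon=\Oh1$.)

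Finally I would tidy up the probabilistic part. The ``with high probability'' guarantees of Lemma~\ref{lm:ams-param}, and hence of Lemma~\ref{lem:ams}, must hold simultaneously for all $\Oh{p}$ subproblems arising across the $k$ levels; I would obtain this by a union bound after strengthening the sample-size requirement to $ab=\Om{\log p}$ uniformly, which is harmless since $ab$ enters only the already-lower-order sample-sorting term. A last routine check is that recursion stays clean: the elements a group receives form a contiguous range of the $br$ buckets, so each subproblem is again an AMS-sort instance, and the bucket partition already computed one level up can be reused, as claimed, without affecting the asymptotics.
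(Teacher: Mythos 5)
Your proposal is correct and follows essentially the same route as the paper's proof: choose $r=\sqrt[k]{p}$ and a per-level imbalance $\varepsilon'=\Th{\varepsilon/k}$ with $(1+\varepsilon')^k\le 1+\varepsilon$, apply Lemma~\ref{lem:ams} level by level, sum the exchange terms over the shrinking subnetworks, bound the communication volume by $k\cdot\Tword r/\varepsilon'=\Oh{\Tword k^2\sqrt[k]{p}/\varepsilon}$, and use the preconditions $k=\Oh{\log p/\log\log p}$ and $1/\varepsilon=\Oh{\sqrt[k]{n}}$ to collapse the $k$ rounds of partitioning work into $\Oh{\tfrac{n}{p}\log n}$. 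Your additional remarks (the induction on per-PE load, the union bound with $ab=\Om{\log p}$, absorbing $\Oh{\Tstart k\log p}$ into the $\ExchTilde$ terms) merely make explicit details the paper's outline leaves implicit.
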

\begin{proof}
We choose $r=\sqrt[k]{p}$.  Since errors multiply, we choose
$\varepsilon'=\sqrt[k]{1+\varepsilon}-1=\Th{\frac{\varepsilon}{k}}$ as the balance parameter for
each level. Using Lemma~\ref{lem:ams} we get the following terms.\\
For internal computation:
$\Oh{\frac{n}{p}}\log n$ for the final internal sorting.
(We do not exploit that overpartitioning presorts the data to some extent.)
For partitioning, we apply Lemma~\ref{lem:ams} and get time
\begin{align}
\OhL{k\log\frac{r}{\epsilon'}}&=\OhL{k\frac{n}{p}\log\frac{k\sqrt[k]{p}}{\varepsilon}}\nonumber\\
  &=\frac{n}{p}\OhL{\log p+k\log k +k\log\frac{1}{\varepsilon}}\label{eq:detailedAMS}\\
  &=\frac{n}{p}\OhL{\log p+\log n}\nonumber
\punkt
\end{align} 
The last estimate uses the preconditions $k=\Oh{\log p/\log\log p}$ and
$\frac{1}{\varepsilon}=\Oh{\sqrt[k]{n}}$ in order to simplify the theorem.

For communication volume we get
$k\cdot\Tword\frac{r}{\varepsilon'}=\Oh{\Tword \frac{k^2\sqrt[k]{p}}{\varepsilon}}$. For startup latencies we get 
$\Oh{\Tstart k\log p}$ which can be absorbed into the $\ExchTilde()$-terms.

The data exchange term is the same as for RLM-sort except that we have a slight
imbalance in the communication volume.
\end{proof}

Using a similar argument as for RLM-sort, for constant $k$ and $\varepsilon$, we get an isoefficiency function of
$p^{1+1/k}/\log p$ for $r=\sqrt[k]{p}$. This is a factor $\log^2 p$ better than for RLM-sort
and is an indication that AMS-sort might be the better algorithm -- in particular
if some imbalance in the output is acceptable and if the inputs are rather small.

Another indicator for the good scalability of AMS-sort is that we can view it as
a generalization of parallel quicksort that also works efficiently for very
small inputs.  For example, suppose $n=\Oh{p\log p}$ and $1/\varepsilon=\Oh{1}$.
We run $k=\Oh{\log p}$ levels of AMS-sort with $r=\Oh{1}$ and
$\varepsilon'=\Oh{k/\varepsilon}$.  This yields running time
$\Oh{\log^2p\log\log p+\Tstart\log^2p}$ using the bound from
Equation~(\ref{eq:detailedAMS}) for the local work.  This does a factor
$\Oh{\log\log p}$ more local work than an asymptotically optimal algorithm.
However, this is likely to be irrelevant in practice since it is likely that
$\Tstart\gg \log\log p$. Also the factor $\log\log p$
would disappear in an implementation that exploits the information gained during
bucket partitioning.

\section{Experimental Results}
\label{s:experiments}

We now present the results of our AMS-sort and RLM-sort experiments. In our
experiments we run a \emph{weak scaling} benchmark, which shows how the
wall-time varies for an increasing number of processors for a fixed amount of
elements per processor. 
Furthermore, in Appendix~\ref{app:furtherPlots} we show additional experiments
considering the effect of overpartitioning in more detail.
The test covers the AMS-sort and RLM-sort algorithms executed
with $10^5$, $10^6$, and $10^7$\,64-bit integers. We ran our experiments at the
thin node cluster of the SuperMUC
(\href{www.lrz.de/supermuc}{www.lrz.de/supermuc}), a island-based distributed
system consisting of $18$ islands, each with 512 computation nodes. 
However, the maximum number of islands available to us was four.
Each computation node has two Sandy Bridge-EP Intel Xeon E5-2680 8-core processors
with a nominal frequency of $2.7$\,GHz and $32$\,GByte of memory. However, jobs
will run at the standard frequency of $2.3$\,GHz as the LoadLeveler does not
classify the implementation as accelerative based on the algorithm's energy
consumption and runtime. A non-blocking topology tree connects the nodes within an island
using the Infiniband FDR10 network technology. Computation nodes are connected to
the non-blocking tree by Mellanox FDR ConnectX-3 InfiniBand mezzanine adapters. A pruned tree connects the islands
among each other with a bi-directional bi-section bandwidth ratio of $4:1$. The
interconnect has a theoretical bisection bandwidth of up to $35.6$
TB/s.

\subsection{Implementation Details}

We implemented AMS-sort and RLM sort in C++ with the main objective to
demonstrate that multilevel algorithms can be useful for large $p$ and moderate
$n$. We use naive prefix-sum based data delivery without randomization since we
currently only use random inputs anyway -- for these the naive algorithm
coincides with the deterministic algorithm since all pieces are large with high
probability.

AMS-sort implements overpartitioning, however using the simple sequential
algorithm for bucket grouping which incurs an avoidable factor $\Oh{\log
  n}$. Also, information stemming from overpartitioning is not yet exploited for
the recursive subproblems.  This means that overpartitioning is not yet as
effective as it would be in a full-fledged implementation.

We divide each level of the algorithms into four distinct phases: splitter
selection, bucket processing (multiway merging or distribution), data delivery, and local sorting. To measure the
time of each phase, we place a MPI barrier before each phase. Timings for these
phases are accumulated over all recursion levels.

The time for building MPI communicators (which can be considerable) is not
included in the running time since this can be viewed as a precomputation that
can be reused over arbitrary inputs.

The algorithms are written in \verb!C++11! and compiled with version~$15.0$ of
the Intel \verb!icpc! compiler, using the full optimization flag \emph{-O3} and
the instruction set specified with \emph{-march=corei7-avx}. For inter-process
communication, we use version version~$1.3$ of the IBM \verb!mpich2! library.

During the bucket processing phase of RLM-sort, we use the
\verb!sequential_multiway_merge! implementation of the GNU Standard C++ Library
to merge buckets \cite{SSP07}. We used our own implementation of multisplitter partitioning
in the bucket processing phase, borrowed from super scalar sample
sort~\cite{SW04}. 

For the data delivery phase, we use our own implementation of a 1-factor
algorithm~\cite{SanTra02www} and compare it against the all-to-allv
implementation of the IBM \verb!mpich2! library. The 1-factor implementation
performs up to $p$ pairwise \texttt{MPI\_Isend} and \sloppy{\texttt{MPI\_Irecv}}
operations to distribute the buckets to their target groups. In contrast to the
\verb!mpich2! implementation, the 1-factor algorithm omits the exchange of empty
messages. We found that the 1-factor implementation is more stable and exchanges
data with a higher throughput on the average. Local sorting uses
\verb!std::sort!.

\subsection{Weak Scaling Analysis}

\begin{table}[t]
\begin{center}
\begin{tabular*}{\columnwidth}{@{\extracolsep{\fill} }c|rrrrr}
                              & level & \multicolumn{4}{c}{$p$}   \\
$k$                                 & & $512$ & $2048$ & $8192$ & $32768$ \\
\hline
\parbox[c]{0.7cm}{\centering$1$} & 1 & $16$  & $16$   & $16$   & $16$    \\\hline
\parbox[c]{0.7cm}{\centering$2$} & 1 & $32$  & $128$  & $512$  & $2048$  \\
                                 & 2 & $16$  & $16$   & $16$   & $16$    \\\hline
\parbox[c]{0.7cm}{\centering$3$} & 1 & $8$   & $16$   & $32$   & $64$    \\
                                 & 2 & $4$   & $8$    & $16$   & $32$    \\
                                 & 3 & $16$  & $16$   & $16$   & $16$    \\
\end{tabular*}
\end{center}
\caption{Selection of $r$ for weak scaling experiments}\label{tab:level_selection}
\end{table}

\begin{table}[t]
\begin{center}
\begin{tabular*}{\columnwidth}{@{\extracolsep{\fill} }c|rrrr}
 & \multicolumn{4}{c}{$p$} \\
$n/p$& $512$ & $2048$ & $8192$ & $32768$ \\
\hline
$10^5$      & $0.0228$ & $0.0277$ & $0.0359$ & $0.0707$ \\
$10^6$      & $0.2212$ & $0.2589$ & $0.2687$ & $0.9171$ \\
$10^7$      & $2.6523$ & $2.9797$ & $4.0625$ & $6.0932$ \\
\end{tabular*}
\end{center}
\caption{AMS-sort median wall-times of weak scaling experiments in seconds}\label{tab:wall_time}
\end{table}

\begin{figure}[t]
  \begin{center}
    \includegraphics[width=\columnwidth]{./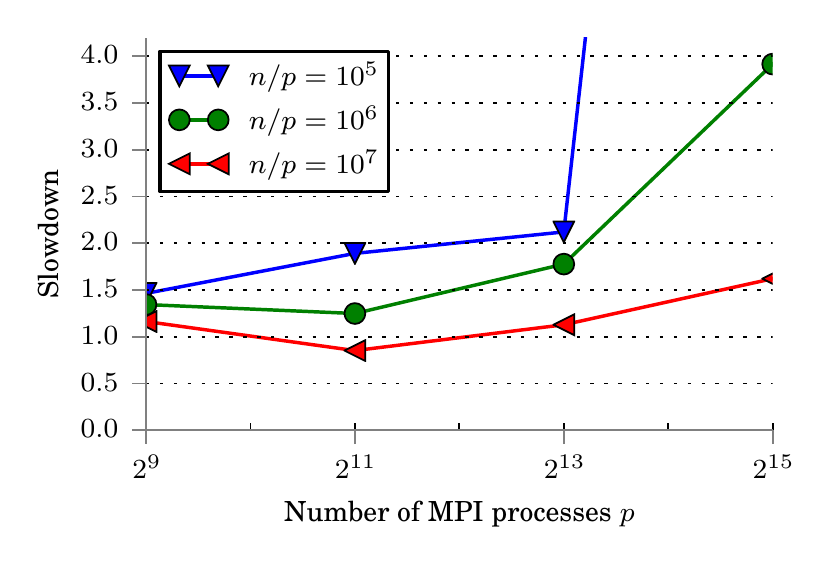}
    \caption{Slowdown of RLM-sort compared to AMS-sort based on optimal level choice}\label{fig:rlm_slowdown}
  \end{center}
\end{figure}

\begin{figure}[h!]
\subfloat{\includegraphics[width=\columnwidth]{./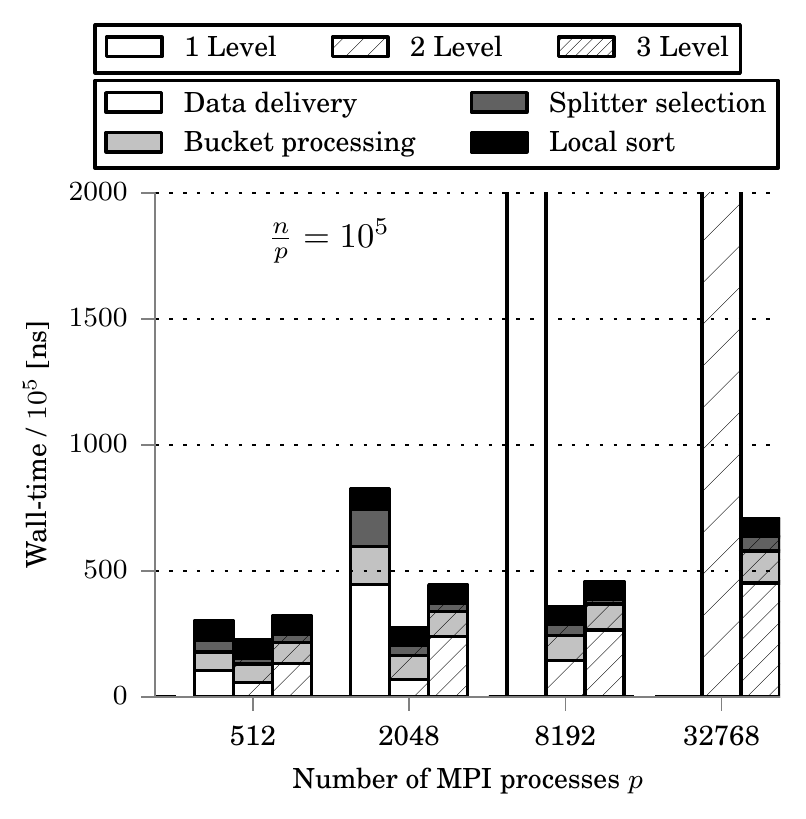}}\\[-10mm]
\subfloat{\includegraphics[width=\columnwidth]{./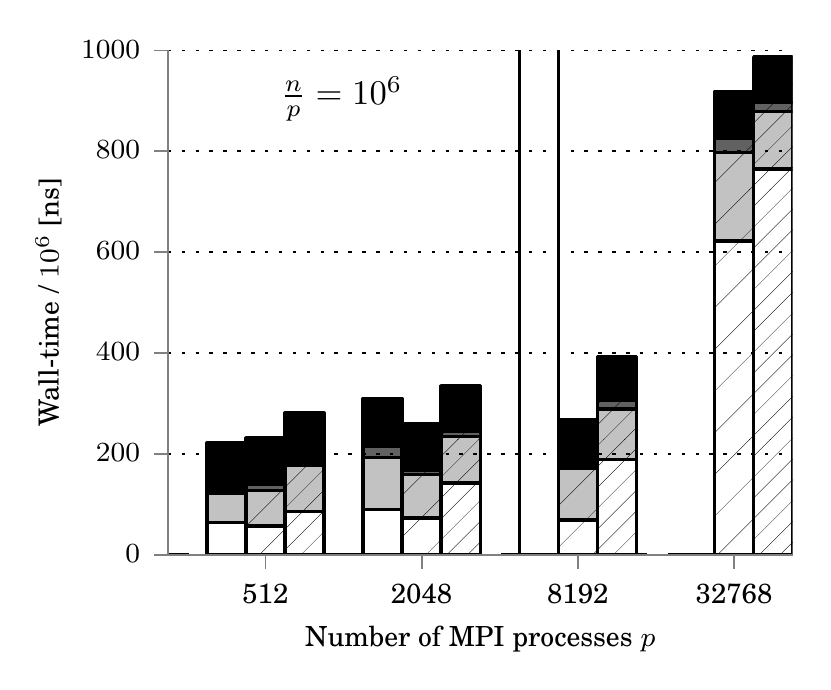}}\\[-12mm]
\subfloat{\includegraphics[width=\columnwidth]{./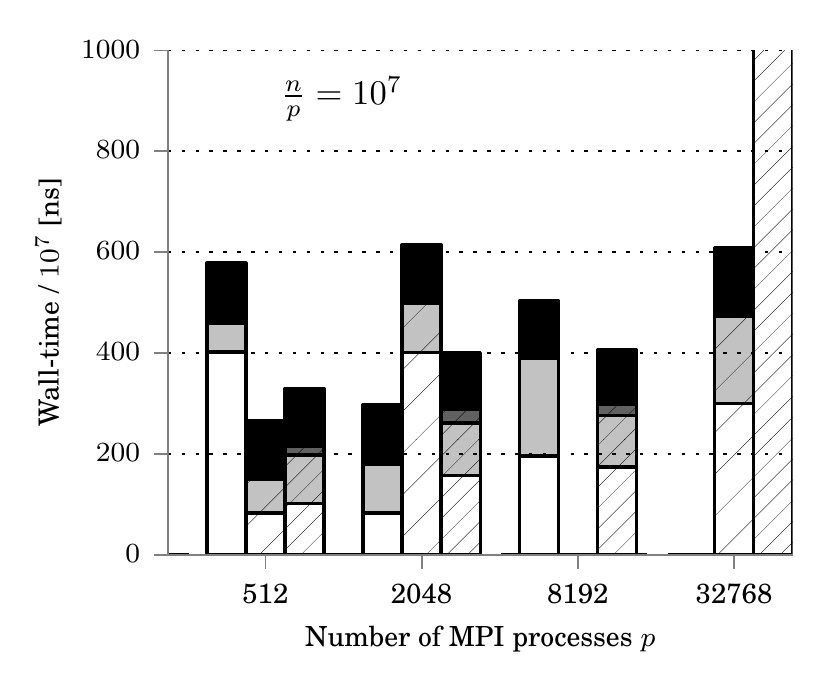}}\\
\caption{Weak scaling with $10^5$, $10^6$, and $10^7$ elements per MPI process of AMS-sort}\label{fig:runningtime_bars}
\end{figure}

The experimental setting of the weak scaling test is as follows: We benchmarked
AMS-sort at $32$, $128$, $512$, and $2048$ nodes. Each node executed $16$ MPI
processes. This results in $512$, $2048$, $8192$, and $32768$ MPI processes. The
benchmark configuration for $2048$ nodes has been executed on four exclusively
allocated islands. 
Table~\ref{tab:level_selection} shows the level
configurations of our algorithm. AMS-sort, configured with more than one level, splits the
remaining processes into groups with a size of $16$ MPI processes at the second
to last level. Thereby, the last level communicates just node-internally. For the
$3$-level AMS-sort, we split the MPI processes at the first level into
$2^{\ceil{\log(p)/2}}$ groups. AMS-sort configured the splitter selection
phase with an overpartitioning factor of $b = 16$ and an
oversampling factor of $a = 1.6 \: \log_{10}n$.

Figure~\ref{fig:runningtime_bars} details the wall-time of AMS-sort up to three
levels. For each wall-time, we show the proportion of time taken by each phase. The depicted wall-time is the
median of five measurements. 
Figure~\ref{fig:runningtime_distribution} in the Appendix shows the distribution of the wall-times.
Observe that AMS-sort is not limited by the
splitter selection phase in all test cases. In most cases, AMS-sort with more
than one level decreases the wall-time up to $8192$ MPI processes. Also, there
is a speedup in the data delivery phase and no significant slowdown in the
bucket processing phase due to cache effects. In these cases, the cost for
partitioning the data and distributing more than once is compensated by the decreased number of startups.
For the smaller volume of $10^5$ elements per MPI process, note that $3$-level AMS-sort 
is much faster than $2$-level AMS-sort in our experimental setup; the effect is reversed for more elements.
Note that there is inter-island data delivery at the first and second
level of $3$-level AMS-sort. The slowdown of sorting $10^6$ elements per MPI
process with $3$-level AMS-sort compared to $2$-level AMS-sort is small. So we
assume that the three level version becomes faster than the two level version
executed at more than four islands. In that case, it is more reasonable to set the
number of groups in the first level equal the amount of islands. This
results in inter-island communication just within the first level.

Table~\ref{tab:wall_time} depicts the median wall-time of our weak scaling
experiments of AMS-sort. Each entry is selected based on the level which
performed best. For a fixed $p$, the wall-time increases almost linear with the
amount of elements per MPI process. One exception is the wall-time for $8192$
nodes and $10^7$ elements. We were not able to measure the $2$-level AMS-sort as
the MPI-implementation failed  during this experiment. The wall-time
increases by a small factor up to $8192$ MPI processes for increasing
$p$. Executed with $32768$ MPI processes, AMS-sort is up to $3.5$ times slower
compared to intra-island sorting, allocated at one whole island. The slowdown
can be feasibly explained by the interconnect which connects islands among each
other. The interconnect has an bandwidth ratio of $4 : 1$ compared to the
intra-island interconnect.

Generally, for large $p$, the execution time fluctuates a lot (also see
Figure~\ref{fig:runningtime_distribution}).  This fluctuation is almost
exclusively within the all-to-all exchange. Further research has to show to what
extent this is due to interference due to network traffic of other applications
or suboptimal implementation of all-to-all data exchange. Both effects seem to
be independent of the sorting algorithm however.

Figure~\ref{fig:rlm_slowdown} illustrates the slowdown of RLM-sort compared to
AMS-sort. For each algorithm, we selected the number of levels with the best
wall-time.  Note that the slowdown of
RLM-sort is higher than one in almost all test cases. The slowdown is
significantly increased for small $n$ and large $p$. This observation matches with
the isoefficiency function of RLM-sort which is a $\log^2 p$ factor worse than
the isoefficiency function of AMS-sort.

\FloatBarrier

\subsection{Comparison with Other Implementations}

Comparisons to other systems are difficult, since it is not easy to simply
download other people's software and to get it to run on a large machine.
Hence, we have to compare to the literature and the web.  Our absolute running
times for $n=10^7p$ are similar to those of observed in Solomonik and Kale
\cite{SolKal10} for $n=8\cdot 10^6\cdot p$ on a CrayXT 4 with up to $2^{15}$
PEs. This machine has somewhat slower PEs (2.1 GHz AMD Barcelona) but higher
communication bandwidth per PE.  No running times for smaller inputs are
given. It is likely that there the advantage of multilevel algorithms such as
ours becomes more visible. Still, we view it as likely that adapting their
techniques for overlapping communication and sorting might be useful for our
codes too.

A more recent experiment running on even more PEs is MP-sort  \cite{FSSMC14}.
MP-sort is a single-level multiway mergesort that implements local multiway
merging by sorting from scratch. They run the same weak scaling test as us using
up to 160\,000 cores of a Cray XE6 (16 AMD Opteron cores $\times$ 2 processors
$\times$ 5\,000 nodes).  This code is much slower than ours. For 
$n=10^5\cdot p$, and $p=2^{14}$ the code needs 20.45 seconds -- 289 times more than
ours for $p=2^{15}$. When going to $p=80\,000$ the
running time of MP-sort goes up by another order of magnitude. At large $p$,
MP-sort is hardly slower for larger inputs (however still about six times slower
than AMS-sort). This is a clear indication that a single level algorithm does
not scale for small inputs.

Different but also interesting is the Sort Benchmark which is quite established
in the data base community (\url{sortbenchmark.org}). The closest category is
Minute-Sort. The 2014 winner, Baidu-Sort (which uses the same algorithm as
TritonSort \cite{RasEtAl11}), sorts 7 TB of data (100 byte elements with 10 byte
random keys) in 56.7s using 993 nodes with two 8-core processors (Intel Xeon
E5-2450, 2.2 GHz) each ($p=15\,888$).  Compared to our experiment at
$n=10^7\cdot 2^{15}$, they use about half as many cores as us, and sort about
2.7 times more data.  On the other hand, Baidu-Sort takes about 9.3
times longer than our 2-level algorithm. and we sort
about 5 times more (8-byte) elements. Even disregarding that we also sort about
5 times more (8-byte) elements, this leaves us being about two times more
efficient. This comparison is unfair to some extent since Minute-Sort requires
the input to be read from disk and the result to be written to disk.
However, the machine used by Baidu-Sort has 993$\times$8 hard disks. At a typical
transfer rate of 150 MB/s this means that, in principle, it is possible to read
and write more than 30 TB of data within the execution time. Hence, it seems
that also for Baidu-Sort, the network was the major performance bottleneck.

\section{Conclusion}

We have shown how practical parallel sorting algorithms like multi-way mergesort
and sample sort can be generalized so that they scale on massively parallel
machines without incurring a large additional amount of communication volume.
Already our prototypical implementation of AMS-sort shows very competitive
performance that is probably the best by orders of magnitude for large $p$ and
moderate $n$. For large $n$ it can compete with the best single-level algorithms.

Future work should include experiments on more PEs, a native shared-memory
implementation of the node-local level, a full implementation of data delivery,
faster implementation of overpartitioning, and, at least for large $n$, more
overlapping of communication and computation. However, the major open problem
seems to be better data exchange algorithms, possibly independently of the
sorting algorithm.

\label{s:conclusion}
\paragraph*{Acknowledgments:}
The authors gratefully acknowledge the Gauss Centre for Supercomputing e.V. (\href{www.gauss-centre.eu}{www.gauss-centre.eu}) for funding this project by providing computing time on the GCS Supercomputer SuperMUC at Leibniz Supercomputing Centre (LRZ, \href{www.lrz.de}{www.lrz.de})
Special thanks go to SAP AG, Ingo Mueller, and Sebastian Schlag for making their 1-factor algorithm~\cite{SSM13} available.
Additionally, we would like to thank Christian Siebert for valuable discussions.

\bibliographystyle{abbrv}
\bibliography{diss}

\clearpage
\appendix
\section{Randomized Data Delivery}\label{app:randomDelivery}
\begin{figure*}\normalsize\centering
  \input{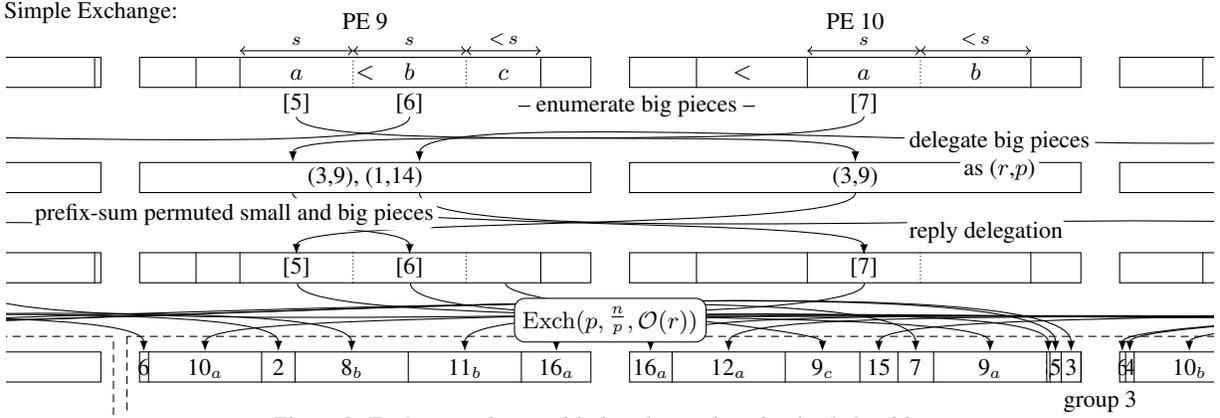}
  \caption{Exchange schema with the advanced randomized algorithm.}\label{fig:exchange schema2}
\end{figure*}

Our advanced randomized data delivery 
algorithm is asymptotically more efficient than the simple one described in Section~\ref{ss:deliver}
and it may be simpler to implement than the deterministic one from Section~\ref{sss:deterministic}
since no parallel merging operation is needed.
Compared to the simple algorithm, the algorithm adds more randomization and invests some additional
communication.  The idea is to break large pieces into several smaller pieces. A
piece whose size $x$ exceeds a limit $s$ is broken into $\floor{x/s}$ pieces of
size $s$ and one piece of size $x\bmod s$. We set $s := an/rp$ to be $a$ times the average piece size $n/rp$ where $a$ is a tuning parameter to be
chosen later.  The resulting small pieces (size below $s$) stay where they are
and the random permutation of the PE numbers takes care of their random
placement. The large pieces are delegated to another (random) PE using a further
random permutation. This is achieved by enumerating them globally over all parts
using a prefix sum.  Suppose there are $K$ large pieces, then we use a
pseudorandom permutation $\pi:0..K-1\rightarrow 0..K-1$ to delegate piece $i$ to
PE $1+\pi(i)\bmod p$. Note that this assignment only entails to tell PE $j$
about the origin of this piece and its target group -- there is no need to move
the actual elements at this point. In Figure~\ref{fig:exchange schema2}, we denote
the delegation tuples with origin PE $p$ and target group $r$ as $(r,p)$.
Next, for each part, a PE reorders its
small pieces and delegated large pieces randomly (of course without choosing the intra-piece sorting). Only then, a prefix sum is
used to enumerate the elements in each part. The ranges of numbers assigned to
the pieces are then communicated back to the PEs actually holding the data and
we continue as in the basic approach -- computing target PEs based on the
received ranges of numbers.

\begin{lemma}\label{lem:delegate}
The two stage approach needs time
$\Oh{\Tstart\log p+r\Tword}+2\Exch(p,\Oh{r/a},\ceil{r/a})$
to assign data to target PEs.
\end{lemma}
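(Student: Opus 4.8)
The plan is to charge each stage of the two-stage algorithm to one of the three summands of the claimed bound, where the delicate point will be a \emph{deterministic} balance argument for the delegation round trip that sidesteps any Chernoff-type loss.

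First I would pin down the combinatorics of the pieces. Every large piece has size exactly $s=an/(rp)$ and the total volume is $n$, so there are at most $K\le rp/a$ large pieces; moreover the $\le n/p$ elements on any single PE produce at most $(n/p)/s=r/a$ large pieces (sum $\floor{x_j/s}\le x_j/s$ over the $r$ original pieces of sizes $x_j$) plus at most $r$ small remainders. Consequently, after delegation every PE handles $\Oh{r}$ pieces, so detecting the breaks, the local random reordering of pieces, and finally computing the target PE(s) of each piece from its rank range are all $\Oh{r}$ local work. Enumerating the $K$ large pieces globally is a scalar prefix sum over per-PE counts plus local numbering, costing $\Oh{\Tstart\log p}$; broadcasting the seed of the pseudorandom permutation $\pi$ is another $\Oh{\Tstart\log p}$, after which each PE evaluates $\pi$ on its $\le r/a$ indices locally; and enumerating the elements inside the $r$ parts after the delegation is a length-$r$ vector-valued prefix sum costing $\Oh{r\Tword+\Tstart\log p}$. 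Together these account for the $\Oh{\Tstart\log p+r\Tword}$ term.

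The heart of the argument is the two $\Exch$ rounds. In the delegation step each PE sends its $\le r/a$ tuples (origin PE, target group — the size is known to equal $s$), hence sends $\Oh{r/a}$ words in at most $r/a$ messages. For the receiving side I would use that $\pi$ is a \emph{permutation} of $0..K-1$: the multiset $\{\pi(i)\bmod p\}$ meets each residue class $\floor{K/p}$ or $\ceil{K/p}\le\ceil{r/a}$ times, so every PE receives at most $\ceil{r/a}$ delegated pieces \emph{deterministically}, i.e.\ $\Oh{r/a}$ received words; and since these $\le\ceil{r/a}$ pieces have at most $\ceil{r/a}$ distinct origins, they arrive in at most $\ceil{r/a}$ messages once a sender aggregates all tuples for a common receiver. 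Thus the delegation step fits inside $\Exch(p,\Oh{r/a},\ceil{r/a})$. The reply step, in which a PE sends back the assigned rank range of each delegated piece it holds, is symmetric: it sends $\le\ceil{r/a}$ range messages, and a delegator receives one reply per piece it delegated, i.e.\ $\le r/a$ messages and $\Oh{r/a}$ words, again within $\Exch(p,\Oh{r/a},\ceil{r/a})$. Adding the three contributions gives the lemma.

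The step I expect to need the most care is exactly this receiving-side bound for the delegation (and, symmetrically, the delegator-side bound for the reply): if one modelled the targets as independent uniform choices one would only get $\Oh{r/a+\log p}$ words and messages with high probability, whereas the permutation structure forces both quantities to be deterministically $\le\ceil{r/a}$. I would therefore state the permutation-based counting cleanly, and verify in our model that all tuples destined for one receiver may indeed be sent as a single message, so that ``number of distinct senders'' is the correct message count.
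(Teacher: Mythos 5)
Your proof is correct and follows essentially the same route as the paper: bound the number of large pieces per PE by $r/a$ and globally by $pr/a$, charge the constant-size delegation and reply messages to $2\Exch(p,\Oh{r/a},\ceil{r/a})$, and absorb the vector-valued prefix sums and the local work into $\Oh{\Tstart\log p+r\Tword}$. Your permutation-counting argument even sharpens the receive-side bound from the paper's ``with high probability'' to a deterministic $\ceil{r/a}$, but this is a refinement within the same approach rather than a genuinely different one.
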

\begin{proof}
  Each PE will produce at most $\frac{n/p}{s}=\frac{n/p}{an/rp}=r/a$ large
  pieces. Overall, there will be at most $\frac{n}{s}=\frac{n}{an/rp}=pr/a$
  large pieces. The random mapping will delegate at most $\ceil{r/a}$ of these
  messages to each PE with high probability. Since each delegation and notification message has
  constant size, $2\Exch(p,\Oh{r/a},\ceil{r/a})$ accounts for the resulting
communication costs.  All involved prefix sums are vector valued prefix sum with
vector length $r$ and can thus be implemented to run in time $\Oh{\Tstart \log
  p+r\Tword}$. This term also covers the local computations.
\end{proof}

\begin{lemma}
  No PE sends more than $2r(1+1/a)$ messages during the main data exchange of
  one phase of RLM-sort.  Moreover, the total number of messages for a single
  part is at most $p(1+1/r+1/a)$.
\end{lemma}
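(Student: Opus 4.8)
The plan is to count \emph{(piece, receiving-PE) incidences} in the main exchange. Two structural facts will do all the work. First, in the advanced scheme every piece of size exceeding $s=an/(rp)$ is cut into size-$s$ pieces plus at most one smaller remainder, so after splitting no PE ever holds a piece larger than $s$. Second, in RLM-sort every part receives exactly $n/r$ elements, so that each of the $p/r$ PEs of a part is responsible for a contiguous block (``slot'') of $m_i/(p/r)=n/p$ positions of the part. Combining these I would first record that, in the relevant parameter range $a\le r$ we have $s\le n/p$, hence every piece straddles at most one slot boundary and is therefore routed to at most two PEs. Throughout I would ignore rounding of $n/p$, $n/r$ and $\lceil r/a\rceil$ to integers, as elsewhere in the paper.

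For the first claim I would bound the number of pieces a PE holds just before the main exchange. It keeps at most one small/remainder piece per part, hence at most $r$ of them, plus the size-$s$ pieces \emph{delegated} to it, of which there are at most $\lceil r/a\rceil$ by Lemma~\ref{lem:delegate} (with high probability). So a PE holds at most $r(1+1/a)$ pieces, and multiplying by the at most two receiving PEs per piece gives the bound $2r(1+1/a)$.

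For the second claim I would fix a part and use that the pieces routed to it are pairwise disjoint intervals that tile its $n/r$ positions. Their count is at most $p(1+1/a)$: at most one small/remainder piece per origin PE contributes $\le p$, and the size-$s$ pieces, carrying total volume $n/r$, number $\le (n/r)/s=p/a$ (this estimate is deterministic, independent of the random permutations). Now the number of messages into the part equals the number of pieces plus the number of pairs consisting of a piece and an internal slot boundary that the piece crosses (i.e.\ a boundary at position $q$ with $l\le q<u$ for the piece spanning positions $l,\dots,u$). The key step -- and the one I expect to need the most care -- is to argue that each of the $p/r-1$ internal boundaries is crossed by at most one piece; this is immediate once one notes the pieces are disjoint intervals, but it is exactly what improves the crude estimate $2\cdot(\#\text{pieces})$ to the sharper $p(1+1/a)+(p/r-1)<p(1+1/r+1/a)$. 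I would finish by remarking, as above, that the suppressed rounding only perturbs lower-order terms.
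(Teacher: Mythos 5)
Your argument for the second claim is essentially the paper's: per part at most $p$ small/remainder pieces plus at most $(n/r)/s=p/a$ size-$s$ pieces, and since the pieces assigned to a part are disjoint intervals, each of the $p/r-1$ internal PE boundaries is crossed by at most one piece, giving at most $p(1+1/a)+p/r-1<p(1+1/r+1/a)$ messages; making the condition $s\le n/p$ (i.e.\ $a\le r$) explicit for the ``at most two receivers per piece'' step is a sensible addition that the paper leaves implicit.

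The gap is in the first claim. In the algorithm, delegation moves only a constant-size \emph{description} of each size-$s$ piece (``there is no need to move the actual elements at this point''); the assigned position ranges are then communicated back, and in the main exchange every piece is sent by the PE that actually holds its data, i.e.\ by the origin PE, not by the delegatee. Hence the messages a PE sends are those for the pieces cut from its own $n/p$ elements: at most $r$ small/remainder pieces (one per part) plus at most $(n/p)/s=r/a$ size-$s$ fragments, a purely deterministic count. Your accounting instead charges a PE with the pieces delegated \emph{to} it (at most $\lceil r/a\rceil$, and only with high probability via Lemma~\ref{lem:delegate}) and omits the PE's own size-$s$ fragments entirely, even though these are exactly what it sends --- up to $r/a$ of them. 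The numerical bound happens to coincide, but as written the step bounds the sending load of a different algorithm (one in which delegated data is forwarded by the delegatee), and it would also needlessly downgrade the claim to a with-high-probability statement, whereas this lemma --- in contrast to Lemma~\ref{lem:receive} --- is deterministic. The repair is the paper's one-liner: each PE produces at most $r(1+1/a)$ pieces from its own data, and each piece yields at most two messages.
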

\begin{proof}
  As shown above, each PE produces at most $r(1+1/a)$ pieces, each of which may
  be split into at most two messages.  For each part, there are at most $p$ small
  pieces and $\frac{n/r}{an/rp}=p/a$ large pieces.  At most $p/r-1< p/r$ pieces
  can be split because their assigned range of element numbers intersects the
  ranges of responsibility of two PEs. Overall, we get $p(1+1/r+1/a)$ messages
  per part.
\end{proof}

\begin{lemma}\label{lem:receive}
  Assuming that our pseudorandom permutations behave like truly random
  permutations, with probability $1-\Oh{1/p}$, no PE receives more than
  $1+2r(1+1/a)$ messages during one phase of RLM-sort for some value of
  $a\in\Th{\sqrt{r/\log p}}$.
\end{lemma}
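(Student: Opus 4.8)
The plan is to charge the messages received by a fixed target PE to the pieces of its part that land in that PE's range of element positions, and then to show this number is concentrated by an occupancy / Hoeffding argument exploiting the randomness of where the pieces end up; a union bound over all $p$ PEs together with a good choice of $a$ will finish the argument.

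First I would fix a PE $j$. It belongs to some part $i$, which --- using the perfect partitioning of RLM-sort --- contains exactly $n/r$ elements spread over its $p/r$ PEs, so PE $j$ owns a contiguous window $W_j$ of at most $\ceil{n/p}$ element positions inside that part. Every message PE $j$ receives stems from a piece of part $i$ whose assigned range of positions meets $W_j$, and by the counting already done part $i$ consists of at most $p$ small pieces (each of size below $s$) and at most $p/a$ large pieces (each of size exactly $s$), with $s=an/(rp)$; so part $i$ has $N\le p(1+1/a)<2p$ pieces in all (using $a\ge1$). In the final order produced by the prefix sum the pieces meeting $W_j$ form a contiguous run, and since this run covers $W_j$ but overhangs it by less than $s$ on each side, its pieces have total size less than $\abs{W_j}+2s\le n/p+3s$. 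Therefore, if PE $j$ received more than $t\Def 1+2r(1+1/a)$ messages, then some block of $t$ consecutive pieces of part $i$ would have total size below $n/p+3s$, so the failure probability for PE $j$ is at most $N\cdot\max_R\prob{\mathrm{size}(R)<n/p+3s}$, where $R$ ranges over the fewer than $2p$ blocks of $t$ consecutive pieces.

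Next I would model the order of the pieces inside a part as a truly random permutation --- this is where the hypothesis on the pseudorandom permutations enters. Then the $t$ pieces of a fixed block $R$ form a uniformly random $t$-subset of the $N$ pieces, so $\mathrm{size}(R)$ is a sum of $t$ values in $[0,s]$ sampled without replacement, with mean $t\cdot\frac{n/r}{N}\ge\frac{2n}{p}$. Since $n/p+3s=\frac{n}{p}(1+3a/r)<\frac{3n}{2p}$ whenever $a=\oh{r}$, the bad event is a downward deviation by a constant fraction of the mean, and Hoeffding's inequality for sampling without replacement gives $\prob{\mathrm{size}(R)<n/p+3s}\le\exp(-\Om{(n/p)^2/(t\,s^2)})=\exp(-\Om{r/a^2})$. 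Now I would pick $a=\Th{\sqrt{r/\log p}}$ with a small enough hidden constant --- which forces $r=\Om{\log p}$, so indeed $a\ge1$ --- making the exponent $\Om{\log p}$ with a constant above $3$; a union bound over the $<2p$ blocks of part $i$ and then over all $p$ PEs bounds the total failure probability by $\Oh{1/p}$. Finally I would remark that the bound $t$ is not overshot before the final prefix sum either: by (the argument behind) Lemma~\ref{lem:delegate} each PE is handed at most $\ceil{r/a}$ delegated large pieces, which for a genuine permutation $\pi(\cdot)\bmod p$ is in fact deterministic.

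The hard part will be the concentration step. The order in which the pieces of a part appear is a composition of a (pseudo)random PE permutation used for the prefix sum with the per-PE random reorderings of the delegated large pieces and the local small pieces, and this is \emph{not} a uniformly random permutation of all pieces. One has to argue that it is ``random enough'' --- for instance that, conditioned on the per-PE piece multisets, the set of pieces falling into any fixed block of $t$ positions is still close to a uniform $t$-subset, and that the piece-size indicators are negatively associated so that a Hoeffding/Chernoff-type bound is legitimate. A related subtlety is that small pieces may be arbitrarily tiny, so the argument must rely genuinely on the randomization rather than on any per-piece size lower bound; this is exactly why large pieces are first cut down to size $s$ and delegated, and why $a$ cannot be taken larger than $\Th{\sqrt{r/\log p}}$ without losing the concentration.
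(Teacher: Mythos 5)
Your proposal is correct and follows essentially the same route as the paper's proof: reduce the bad event to a run of $\approx 2r(1+1/a)$ consecutive pieces of a part having total size about $n/p$ under the random ordering, bound the piece counts by $p(1+1/a)$, apply a Hoeffding-type bound for the non-independent (negatively associated / sampled-without-replacement) piece sizes bounded by $s=an/(rp)$, and choose $a=\Th{\sqrt{r/\log p}}$ before a union bound. The subtlety you flag at the end --- that the composed randomization is only argued, not proved, to act like a uniform permutation of the pieces --- is exactly the abstraction the paper also invokes without further justification, so your treatment matches the paper's level of rigor.
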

\begin{proof}
  Let $m\leq p(1+1/a)$ denote the number of pieces generated for part $x$.  It
  suffices to prove that the probability that any of the PEs responsible for it
  receives more than $1+2mr/p\leq 2r(1+1/a)$ messages is at most $1/rp$ for an
  appropriate constant.  We now abstract from the actual implementation of data
  assignment by observing that the net effect of our randomization is to produce
  a random permutation of the pieces involved in each part.\TODO{do we need
    to prove this?}  In this abstraction, the ``bad'' event can only occur if
  the permutation produces $2mr/p$ consecutive pieces of total size at most
  $n/p$.  More formally, let $X_1$,\ldots,$X_m$ denote the piece sizes. The
  $X_i$ are random variables with range $[0,\frac{an}{rp}]$ and
  $\sum_iX_i=n/r$. The randomness stems from the random permutation determining
  the ordering. Unfortunately, the $X_i$ are not independent of each
  other. However, they are negatively associated \cite{DubPriRan96}, i.e., if
  one variable is large, then a different variable tends to be smaller.  In this
  situation, Chernoff-Hoeffding bounds for the probability that a sum deviates
  from its expectation still apply. Now, for a fixed $j$, consider
  $X\Is\sum_{j\leq i < j+2mr/p}X_i$. It suffices to show that $\prob{X<n/p}\leq
  1/rpm$ -- in that case, the probability that the bad event occurs for some $j$
  is at most $1/rp$. We have $\expect[X]=2n/p$ which differs by $t\Is n/p$ from
  the bound marking a bad event.  Hoeffding's inequality then assures that the
  probability of the bad event is at most
  \begin{align*}
  \prob{X<n/p}&\leq 2e^{-\frac{2t^2}{\frac{2mr}{r}\cdot\left(\frac{an}{rp}\right)^2}}
                 =   2e^{-\frac{pr}{ma^2}}
                \leq 2e^{-\frac{pr}{a^2+1}}\punkt
  \end{align*}
  This should be smaller than $1/rp$.
  Solving the resulting relation for $a$ yields
  $$a\leq \frac{1}{2}\left(\sqrt{1+\frac{r}{\ln\frac{rp}{2}}}-1\right)\punkt$$
\end{proof}
Note that Lemma~\ref{lem:receive} implies that with high probability both the
number of sent and received messages during data exchange will be close to $2r$
and the number of message startups for delegating pieces (see
Lemma~\ref{lem:delegate}) will be $o(r)$.  Hence, we have shown that handling
worst case inputs by our algorithm adds only lower order cost terms compared to
the simple variant (plain prefix sums without any randomization) on average case
inputs. In contrast, applying the simple approach to worst case inputs directly,
completely ruins performance.

We summarize the result in the following theorem:
\begin{theorem}
Data delivery of $r\times p$ pieces to $r$ parts can be implemented to run in time
$$\ExchTilde(p,\tfrac{n}{p},2r)$$
with high probability.
\end{theorem}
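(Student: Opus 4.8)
\noindent\emph{Proof plan.} The plan is to obtain the theorem by assembling the three lemmas of this appendix, fixing the free parameter $a$ at the value supplied by Lemma~\ref{lem:receive} and then checking that every cost contribution is either $\Exch(p,\tfrac{n}{p},2r)$ up to a $(1+\oh{1})$ factor or of strictly lower order, so that the $\ExchTilde$ shorthand --- which also absorbs additive terms of the form $\Oh{\Tstart\log p}$ and $\Oh{\Tword r}$ --- collapses the whole bound to $\ExchTilde(p,\tfrac{n}{p},2r)$.

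Concretely, I would first set $a=\Th{\sqrt{r/\log p}}$ as in Lemma~\ref{lem:receive}, so $1/a=\oh{1}$ and $r/a=\Th{\sqrt{r\log p}}=\oh{r}$ (in the regime where this primitive is applied $r$ is polynomial in $p$, hence $r=\om{\log p}$; otherwise one simply notes that an $\Exch$-term whose second and third arguments are both $\oh{r}$ is dominated by $\Exch(p,\tfrac{n}{p},2r)$). The preparatory stage --- breaking large pieces, delegating them via a pseudorandom permutation, re-enumerating via vector-valued prefix sums and routing the range information back --- costs, by Lemma~\ref{lem:delegate}, $\Oh{\Tstart\log p+r\Tword}+2\Exch(p,\Oh{r/a},\ceil{r/a})$. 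With the chosen $a$ the first summand is of the two forms $\ExchTilde$ absorbs, and the second summand is $2\Exch(p,\oh{r},\oh{r})$, i.e.\ an exchange whose volume and startup parameters are both $\oh{r}$ and which is therefore dominated by the main exchange; the same Lemma~\ref{lem:delegate} also charges all local computation in this stage at $\Oh{\Tstart\log p+r\Tword}$. Hence the entire preparatory stage folds into $\ExchTilde(p,\tfrac{n}{p},2r)$.

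Next I would bound the single main data exchange. Because the pieces, once enumerated within each part, are mapped to PEs by a prefix-sum assignment, the data stays perfectly balanced up to rounding: every PE sends exactly $n/p$ elements and receives at most $\ceil{m_i r/p}\leq\ceil{n/p}$, so the volume parameter is $h=\tfrac{n}{p}$. For the message count, the lemma bounding the number of messages sent gives at most $2r(1+1/a)=2r(1+\oh{1})$ messages leaving each PE, and Lemma~\ref{lem:receive} gives, with probability $1-\Oh{1/p}$, at most $1+2r(1+1/a)=2r(1+\oh{1})$ messages arriving at each PE. Consequently the main exchange runs in $\Exch(p,\tfrac{n}{p},2r(1+\oh{1}))=(1+\oh{1})\Exch(p,\tfrac{n}{p},2r)$, using the (implicit) regularity of $\Exch$ in its third argument, which is $\ExchTilde(p,\tfrac{n}{p},2r)$.

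Finally I would add the constantly many contributions --- preparatory stage, main exchange, and absorbed local work --- and use that a sum of $\Oh{1}$ many $\ExchTilde(p,\tfrac{n}{p},2r)$ terms is again $\ExchTilde(p,\tfrac{n}{p},2r)$; the only failure event is the probability-$\Oh{1/p}$ event of Lemma~\ref{lem:receive}, which gives the ``with high probability'' qualifier. I expect the main obstacle to be bookkeeping rather than mathematics: being precise about exactly which terms the $\ExchTilde$ notation is entitled to swallow (in particular deriving $r/a=\oh{r}$ from $a=\om{1}$, which quietly relies on $r=\om{\log p}$) and about the regularity assumption on the black-box $\Exch$ needed to turn a $2r(1+\oh{1})$ startup count into a $(1+\oh{1})$ multiplicative factor.
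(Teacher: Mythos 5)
Your proposal is correct and matches the paper's own argument: the paper also obtains the theorem by choosing $a=\Th{\sqrt{r/\log p}}$ as in Lemma~3, combining it with the delegation cost of Lemma~1 and the send-side bound of Lemma~2, and letting the $\ExchTilde$ notation absorb the $\Oh{\Tstart\log p+\Tword r}$ and $o(r)$-startup delegation terms so that only the main exchange with volume $n/p$ and $2r(1+o(1))$ messages per PE remains. Your explicit bookkeeping (in particular noting the implicit $r=\om{\log p}$ regime and the regularity of $\Exch$ in its third argument) is simply a more careful spelling-out of what the paper states in the paragraph preceding the theorem.
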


\section{Pseudorandom Permutations}\label{app:randperm}
During redistribution of data, we will randomize the rearrangement to avoid bad
cases. For this, we select a pseudo-random permutation, which can be constructed,
e.g., by composing three to four Feistel permutations~\cite{LubRac88,DSSS04}. We
adapt the description from \cite{DSSS04} to our purposes.

Assume we want to compute a permutation
$\pi:0..n-1\rightarrow 0..n-1$. Assume for now that $n$ is a square so
that we can represent a number $i$ as a pair $(a,b)$ with
$i=a+b\sqrt{n}$.  Our permutations are constructed from \emph{Feistel}
permutations, i.e., permutations of the form
$\pi_f((a,b))=(b,a+f(b)\bmod \sqrt{n})$ for some pseudorandom mapping
$f:0..\sqrt{n}-1\rightarrow 0..\sqrt{n}-1$.
$f$ can be any hash function that behaves reasonably similar to a random function in practice.
 It is known that a permutation
$\pi(x)=\pi_f(\pi_g(\pi_h(\pi_l(x))))$ build
by chaining four Feistel permutations is ``pseudorandom'' in a
sense useful for cryptography. The same holds if the innermost and outermost
permutation is replaced by an even simpler permutation \cite{NaoRei99}.
In \cite{DSSS04}, we used just two stages of Feistel-Permutations.

A permutation $\pi'$ on $0..\ceil{\sqrt{n}}^2-1$ can be transformed
to a permutation $\pi$ on $0..n-1$ by
iteratively applying $\pi'$ until a value below $n$ is obtained.
Since $\pi'$ is a permutation, this process must
eventually terminate. If $\pi'$ is random, the expected number
of iterations is close to $1$ and
it is unlikely that more than three iterations are necessary

Since the description of $\pi$ requires very little state, we can replicate this state over all PEs.

\section{Accelerating Bucket Grouping}\label{app:scan}
The first observation for improving the binary search algorithm from
Section~\ref{s:sample} is that a PE-group size can take only $\Oh{(br)^2}$
different values since it is defined by a range of buckets.  We can modify the
binary search in such a way that it operates not over all conceivable group
sizes but only over those corresponding to ranges of buckets.  When a scanning
step succeeds, we can safely reduce the upper bound for the binary search to the
largest PE-group actually used. On the other hand, when a scanning step fails,
we can increase the lower bound: during the scan, whenever we finish a PE-group
of size $x$ because the next bucket of size $y$ does not fit (i.e., $x+y>L$), we
compute $z=x+y$. The minimum over all observed $z$-values is the new lower
bound. This is safe, since a value of the scanning bound $L$ less then $z$ will
reproduce the same failed partition. 
This already yields an algorithm running in time $\Oh{br\log(br)^2}=\Oh{br \log(br)}$.

The second observation is only values for $L$ in the range
$\ceil{n/r-1}..(1+\Oh{1/b})n/r$ are relevant (see
Lemma~\ref{lm:ams-param}). Only $\Oh{br}$ bucket ranges will have a total size
in this range. To see this, consider any particular starting bucket for a bucket
range. Searching  from there to the right for range end points, we can skip all end buckets where the
total size is below $n/r$. We can stop as soon as the total size leaves the relevant range.  
Since buckets have average size $\Oh{n/b}$, only a
constant number of end points will be in the relevant range on the average.
Overall, we get $\Oh{br}\cdot\Oh{1}=\Oh{br}$ relevant bucket ranges.
Using this for initializing the binary search, saves a factor about two for the
sequential algorithm.

Using all $p$ available PEs, we can do even better:
in each iteration, we split the remaining range for $L$ evenly into $p+1$
subranges.  Each PE tries one subrange end point for scanning and uses the first observation
to round up or down to an actually occurring size of a bucket range. Using a
reduction we find the largest $L$-value $\Lmin$ for a failed scan and the
smallest $L$ value $\Lmax$ for a successful scan. When $\Lmax=\Lmin$ we have
found the optimal value for $L$. Otherwise, we continue with the range
$\Lmax..\Lmin$. Since the bucket range sizes in the feasible region are fairly
uniformly distributed, the number of iterations will be $\log_{p+1}\Oh{br}$.
Since $p\geq r$, this is $\Oh{1}$ if $b$ is polynomial in $r$. Indeed, one or
two iterations are likely to succeed in all reasonable cases.

\section{Tie Breaking for Key Comparisons}\label{app:tie}

Conceptually, we assign the key $(x,i,j)$ to an element with key $x$, stored on PE
$i$ at position $j$ of the input array. Using lexicographic ordering makes
the keys unique.  For a practical implementation, it is important not to do this
explicitly for every element.  We explain how this can be done for AMS-sort.
First note, that in AMS-sort there is no need to do tie breaking across levels
or for the final local sorting.  Sample sorting and splitter determination can
afford to do tie breaking explicitly, since these steps are more latency
bound. For partitioning, we can use a version of super scalar sample sort, that
also produces a bucket for elements equal to the splitter.  This takes only one
additional comparison \cite{BES14} per element. Only if an input element $x$
ends up in an equality bucket we need to perform the lexicographic
comparison. Note that at this point, the PE number for $x$ and its input
position are already present in registers anyway.

\section{Additional Experimental  Data}\label{app:furtherPlots}

\begin{figure}[t]
  \begin{center}
    \includegraphics[width=\columnwidth]{./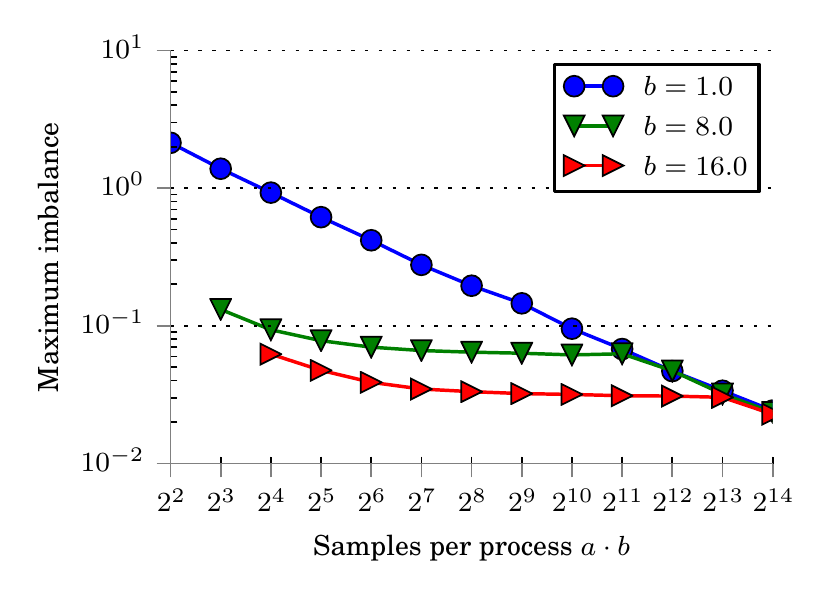}
    \caption{Maximum imbalance among groups of AMS-sort sorted sequences}\label{fig:max_imbalance}
  \end{center}
\end{figure}

\begin{figure}[t]
  \begin{center}
    \includegraphics[width=\columnwidth]{./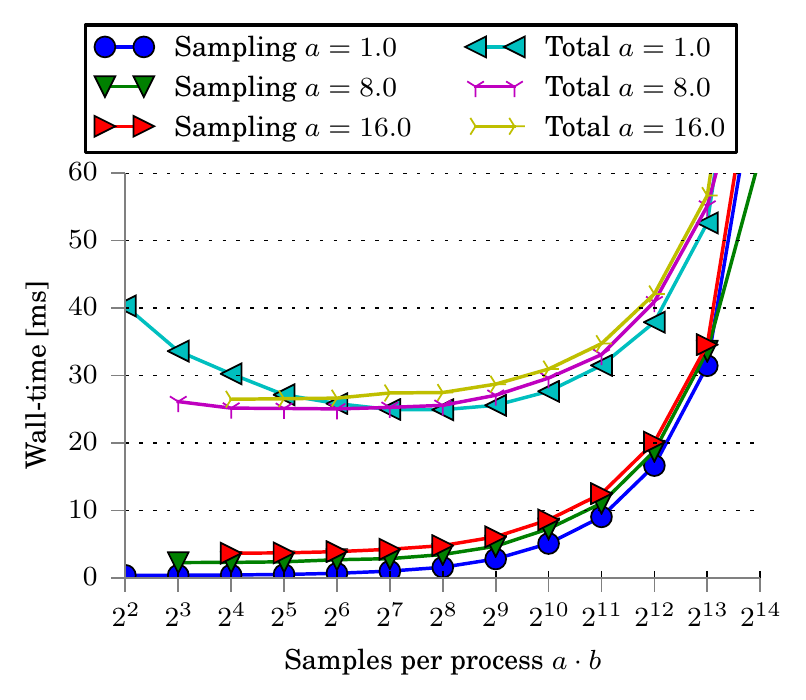}
    \caption{Wall-time of AMS-sort for various values of $a$ and $b$}\label{fig:influence_imbalance}
  \end{center}
\end{figure}

The overpartitioning factor $b$ influences the wall-time of AMS-sort. It has an effect on the
splitter selection phase itself but also an implicit impact on all other phases. 
To investigate this impact, we executed AMS-sort for various values of $b$ with $512$ MPI processes 
and $10^5$ elements each. Figure~\ref{fig:influence_imbalance} shows how the wall-time of AMS-sort depends on the number of samples per process $a \cdot b$. Depending on the oversampling factor $a$,
the wall-time firstly decreases as the maximum imbalance decreases. This leads to faster data delivery, 
bucket processing, and splitter selection phases. However, the wall-time increases for large $a$ as the
additional cost of the splitter selection phase dominates. On the one hand, AMS-sort performs best for 
an oversampling factor of $1$ and an overpartitioning factor of $64$. On the other hand, 
Figure~\ref{fig:max_imbalance} illustrates that the maximum imbalance is significantly higher for slightly
slower AMS-sort algorithms, configured with $b > 1$.

\begin{figure}[t]
\subfloat{\includegraphics[width=\columnwidth]{./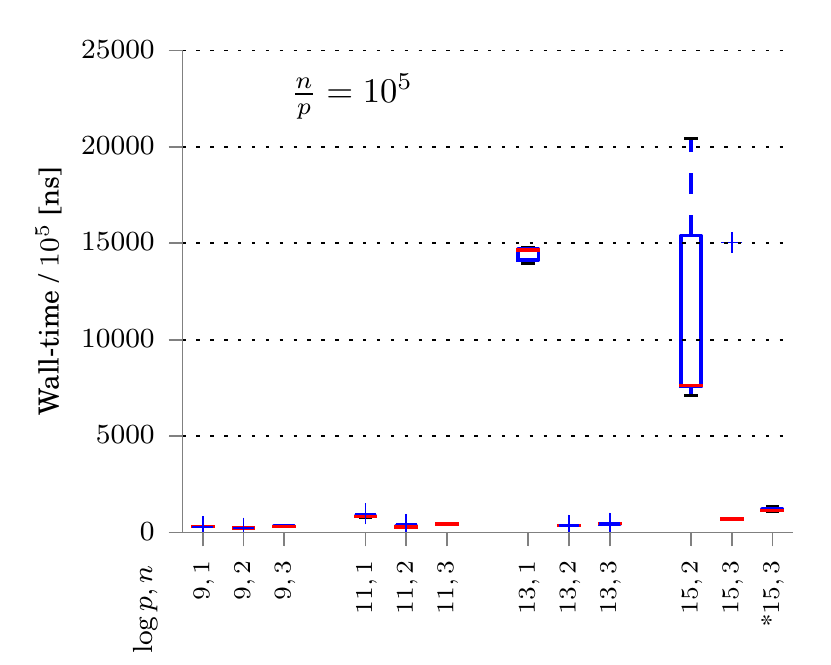}}\\
\subfloat{\includegraphics[width=\columnwidth]{./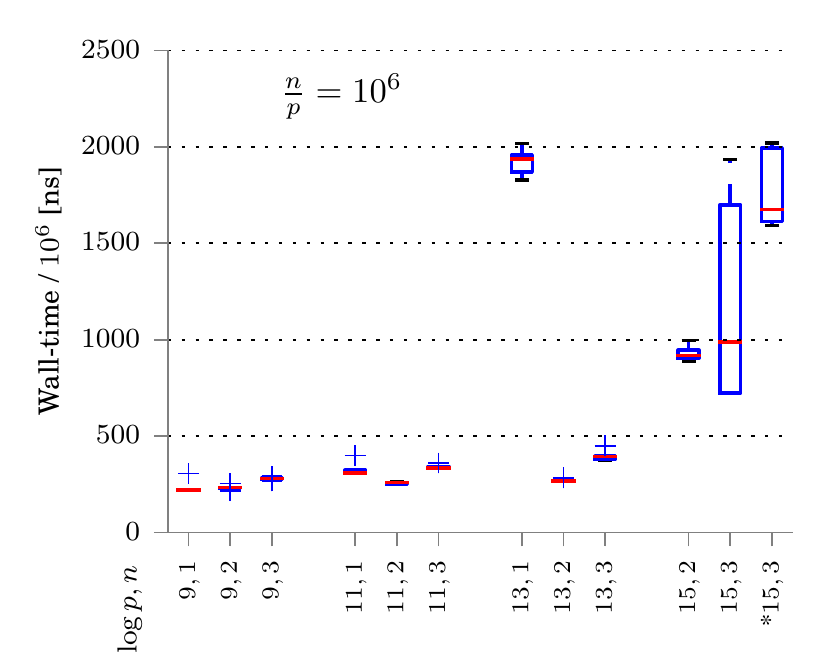}}\\
\subfloat{\includegraphics[width=\columnwidth]{./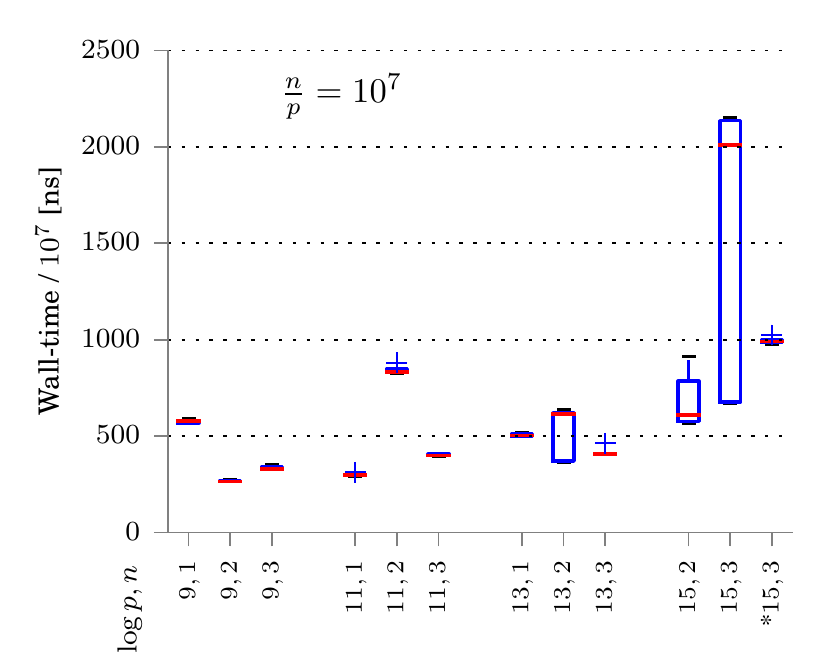}}\\
\caption{AMS-sort with $10^5$, $10^6$, and $10^7$ elements per MPI process}\label{fig:runningtime_distribution}
\end{figure}

\end{document}